\tikzset{join/.code=\tikzset{after node path={%
\ifx\tikzchainprevious\pgfutil@empty\else(\tikzchainprevious)%
edge[every join]#1(\tikzchaincurrent)\fi}}}
\tikzset{>=stealth',every on chain/.append style={join},
         every join/.style={->}}
\numberwithin{equation}{section}
\numberwithin{figure}{section}
       \newtheorem{theorem}{Theorem}[section]
       \newtheorem{proposition}[theorem]{Proposition}
       \newtheorem{lemma}[theorem]{Lemma}
       \newtheorem{conjecture}[theorem]{Conjecture}
\theoremstyle{definition}
       \newtheorem{definition}[theorem]{Definition}
\newcommand{\RR}{{\mathbb{R}}}
\newcommand{\pk}{\mathbb{CP}^{k-1}}
\newcommand{\Oone}{\mathcal{O}_{\mathbb{CP}^{k-1}}(1)}
\newcommand{\wpderi}{\frac{\p}{\p w_i}|_{\tilde{\eta}}}
\newcommand{\wpder}{\frac{\p}{\p w_i}}
\newcommand{\half}{\frac{1}{2}}
\newcommand{\xs}{x_s^{\tilde{\eta}}}
\newcommand{\ys}{y_s^{\tilde{\eta}}}
\newcommand{\W}[2]{{H^{#1,#2}}}
\newcommand{\Wnorm}[3]{\left\Vert{#1}\right\Vert_{H^{#2,#3}}}
\newcommand{\Wnormloc}[4]{\left\Vert{#1}\right\Vert_{H^{#2,#3}(#4)}}
\newcommand{\Lnorm}[2]{\left\Vert{#1}\right\Vert_{L^#2}}
\newcommand{\Lnormloc}[3]{\left\Vert{#1}\right\Vert_{L^#2(#3)}}
\newcommand{\super}[1]{\varphi_{+,#1}}
\newcommand{\sub}[1]{\varphi_{-,#1}}
\newcommand{\p}{\partial}
\newcommand{\pderi}{\frac{\p}{\p t}|_{t=0}}
\newcommand{\M}{\{I\}\cup M^l}
\newcommand{\R}{\{R\}\cup M^R}
\newcommand{\Hol}{Hol_r\left(\Sigma,\mathbb{CP}^{k-1}\right)}
\newcommand{\pRi}{\p^R|_{(\tilde{\eta},z)}}
\begin{document}

\title[Dynamics of Ab. Vortices Without Common Zeros in the Adia. Limit]{Dynamics of Abelian Vortices Without Common Zeros in the Adiabatic Limit}
\author{Chih-Chung Liu}
\maketitle

\begin{abstract}

On a smooth line bundle $L$ over a compact K\"ahler Riemann surface $\Sigma$, we study the family of vortex equations with a parameter $s$. For each $s \in [1,\infty]$, we invoke techniques in \cite{Br} by turning the $s$-vortex equation into an $s$-dependent elliptic partial differential equation, studied in \cite{kw}, providing an explicit moduli space description of the space of gauge classes of solutions. We are particularly interested in the bijective correspondence between the open subset of vortices without common zeros and the space of holomorphic maps. For each $s$, the correspondence is uniquely determined by a smooth function $u_s$ on $\Sigma$, and we confirm its convergent behaviors as $s \to \infty$. Our results prove a conjecture posed by Baptista in \cite{Ba}, stating that the $s$-dependent correspondence is an isometry between the open subsets when $s=\infty$, with $L^2$ metrics appropriately defined.

\end{abstract}

\section{Introduction}
The vortex equations, a set of gauge invariant equations characterizing the minimum of certain energy functionals on
 a Hermitian vector bundle, have been studied quite extensively. An early occurrence can be found in Ginzburg and
 Landau's description of the free energy of superconducting materials, which depends on the external electromagnetic
 potential and the state function of certain electron pairs known as the "Cooper pairs". Finding the equilibrium
 state of the material amounts to minimizing the free energy. See \cite{JT} for a complete description.

Various forms of the energy functionals are available in the literature. We shall vaguely refer to them as the
Yang-Mills-Higgs functional, with historical origins from the classical Yang-Mills functional on field strength
of electromagnetic waves. We will investigate a particular functional, which we describe below.

Let $L$ be a degree $r$ line bundle over an $n$-dimensional closed (compact with empty boundary) K\"ahler manifold $(M,\omega)$. Let $H$ be a
Hermitian metric on $L$ and let $\mathcal{A}(H)$ be the space of connections which are $H$-unitary.
Let $\mathcal{G}$ be the $H$-unitary gauge group of the bundle $L$. To fix the notations uniformly, we will replace the base manifold $M$ by $\Sigma$ if it is a closed Riemann surface of genus $b$, for which we make addition assumption that $r>2b-2$ so that the vector space of holomorphic sections $H^0(\Sigma, L)$ is of uniform dimension on $\Sigma$.

The Hermitian structure $H$ naturally defines $L^2$ norms, which induce corresponding norm topologies, on complex and vector valued forms. $\mathcal{A}(H)$ and $\Omega^0(L)$ (the space of smooth global sections of $L \to \Sigma$) in fact possess standard K\"ahler structures (see \cite{G} for details). With these preliminary structures, we consider
the parameterized Yang-Mills-Higgs functional defined on the space of $H$-unitary connections and $k$ tuples of
smooth sections

\[
YMH_{\tau.s}:\mathcal{A}(H)\times\Omega^{0}(L) \times \ldots \times \Omega^{0}(L)\to\RR,
\]
 given by:
\begin{equation}
YMH_{\tau.s}(D,\phi):=\frac{1}{s^{2}}\left|\left|F_{D}\right|\right|_{L^{2}}^{2}+\sum_{i=1}^k\left|\left|D\phi_i\right|\right|_{L^{2}}^{2}+\frac{s^{2}}{4}\left|\left|\sum_{i=1}^k|\phi_i|^2_H-\tau \right|\right|_{L^{2}}^{2}.\label{YMH}
\end{equation}
\noindent Here $F_{D}\in\Omega^{1,1}(M,End(L))\simeq \Omega^{1,1}(M)$ is the curvature of the connection
$D$ and $\phi=(\phi_1,\ldots,\phi_k)$ is understood to be a $k$ tuple of sections. The positive real constant $s$ possesses physical significance in various situations.
Mathematically, when $n=1$, the parameter $s$ in the functional represents how the Yang-Mills-Higgs functional
changes when deforming the metric by rescaling, that is, $\omega_s=s^2\omega$. The other parameter $\tau$ first
appears in \cite{Br}, in which $s=1$.

Applying standard K\"ahler identities, one can obtain the minimizing equations for $YMH_{\tau,s}$ (See \cite{Ba}
and \cite{Br} for derivations when $s=1$), referred to as the $s$-vortex equations:

\begin{equation}
\begin{cases}
F_{D}^{(0,2)}=0\\
D^{(0,1)}\phi=0\\
\sqrt{-1}\Lambda F_{D}+\frac{s^{2}}{2}(\sum_{i=1}^k|\phi_i|^2_H-\tau)=0.
\end{cases}\label{s-vortex}
\end{equation}

\noindent Here $(p,q)$ refers to the decomposition of forms with respect to a fixed complex structure of $M$. Recall that $\Lambda$ is the $L^2$ adjoint of the Lefschetz operator

\[L(\gamma)=\gamma \wedge \omega.\]

\noindent On $(1,1)$ forms, $\Lambda$ is simply the trace with respect $\omega$:

\[\Lambda(\gamma)=<\gamma,\omega>_\omega \in C^\infty(M).\]

\noindent The
first equation in \eqref{s-vortex} says that $D^{(0,1)}$ is integrable, hence that
it induces a holomorphic structure on $L$ (by a celebrated theorem of Newlander-Nirenberg). For $M=\Sigma$, this condition is automatic. The second equation says that each section $\phi_i$ is holomorphic with respect to this holomorphic structure, and we will adhere to this notational convention throughout this paper. The third equation imposes a relation between curvature forms and norms of the $k$ sections. In some literature, the first two equations are assumed and the third equations is called the vortex equation. We however, study the three equations altogether.

One of the main goals of this paper is to analyze the adiabatic limit of solutions to \eqref{s-vortex} $s \to \infty$. Formally, as $s$ increases,
the curvature term in the third equation in \eqref{s-vortex} becomes negligible. Therefore, it is reasonable to
define the vortex equation at $s=\infty$ to be:

\begin{equation}
\begin{cases}
F_{D}^{(0,2)}=0\\
D^{(0,1)}\phi=0\\
\sum_{i=1}^k|\phi_i|^2_H-\tau=0.
\end{cases}\label{infinity-vortex}
\end{equation}

\noindent The solutions to these equations are then pairs of integrable connections, and corresponding $k$ tuple of holomorphic sections with image lying in the sphere of radius $\tau$ (with respect to the norm defined by $H$). The systems in equations \eqref{s-vortex} and \eqref{infinity-vortex} differ by
the third equation and our focus is to understand the limiting behaviors of the solutions of the third equation
in \eqref{s-vortex} as $s \to \infty$. We will achieve this by first reducing the equation, as in \cite{Br}, to a
scalar non-linear PDE and then by successively approximating, as in \cite{kw}, these equations by means of linear
ones.

The invariance of equations \eqref{s-vortex} and \eqref{infinity-vortex} under natural $\mathcal{G}$ action allows us to define the space
of gauge classes of solutions:

\begin{definition}
For each $k$, $s$ and $\tau$, we define the moduli space of solutions \[\nu_k(s,\tau) =\{(D,\phi)\in \mathcal{A}(H) \times \Omega^0(L) \times \ldots \times \Omega^0(L)  \hspace{0.1cm}| \,\eqref{s-vortex} \hspace{0.1cm} \text{holds} \}/\mathcal{G}. \]
Also, we define \[\nu_k(\infty,\tau) =\{(D,\phi)\in \mathcal{A}(H) \times \Omega^0(L) \times \ldots \times \Omega^0(L)  \hspace{0.1cm}| \,\eqref{infinity-vortex} \hspace{0.1cm} \text{holds} \}/\mathcal{G}.\]
\label{first definition}
\end{definition}

The spaces of solutions to \eqref{s-vortex} and \eqref{infinity-vortex} are smooth (actually K\"ahler) manifold as they can be realized as the level set of certain moment map (see, for example, \cite{G} for details). Furthermore, smooth connections on line bundles are clearly irreducible, and $\mathcal{G}$ acts on sections and connections by multiplication and conjugation, respectively. It is therefore a free action,  making the quotient space $\nu_k(s,\tau)$ of solution spaces to \eqref{s-vortex} and \eqref{infinity-vortex} smooth manifolds (see Chapter 4 and 5 of \cite{DK} for detailed discussions). We will see, in section 2, that they are of finite dimensions.

Bradlow \cite{Br} \cite{Br1}, Garcia-Prada \cite{G} and Bertram et.al \cite{BDW} have described $\nu_k(1,\tau)$
quite thoroughly for $M=\Sigma$. In fact, we will see that for finite values of $s$ and $\tau$ large enough, $\nu_k(s,\tau)$ are
all topologically identical.

Before we state the main statements, we pause briefly to examine the two real parameters $s$ and $\tau$ in the
vortex equations \eqref{s-vortex}. One notes that the gauge class $[D,\phi]$ satisfies \eqref{s-vortex} with $s$ and $\tau$
precisely when $[D,\frac{\phi}{\sqrt{\tau}}]$ does, with $s$ and $\tau$ replaced by $s\sqrt{\tau}$ and 1,
respectively. That is, the rescaling

\[[D,\phi] \mapsto [D,\frac{\phi}{\sqrt{\tau}}]\]

\noindent defines a bijection between $\nu_k(s,\tau)$ and $\nu_k(s\sqrt{\tau},1)$. These two parameters can
therefore be combined into one without altering the descriptions of the solution spaces. However, for the
convenience of comparing with classical results, we keep them separated, with the understanding that they are not
independent parameters.

Our main result is motivated by results in \cite{Ba} and \cite{BDW}. We are interested in the subset of
$\nu_k(s,\tau)$ consisting of $k$ sections without common zeros:

\begin{definition}
\[
\nu_{k,0}(s,\tau)=\{[D,(\phi_1, \cdots, \phi_k)] \in \nu_k(s,\tau)\hspace{0.1cm}|\hspace{0.1cm}\cap_{i=1}^{k}\phi^{-1}_i(0)=\emptyset\}.
\]
\label{introduction non vanishing sections}
\end{definition}

This subset is open and dense with respect to the quotient topology of $\nu_{k,0}(s,\tau)$ descended from the norm topology of the solution space. Indeed, the $\mathcal{G}$-equivariant evaluation map

\[ev([D,\phi,p]) := \phi(p) \in \mathbb{C}^k\]

\noindent is continuous with respect to the standard topology of $\mathbb{C}^k$ and $\nu_{k,0}(s,\tau)$ is the complement of $ev^{-1}(0,\ldots,0)$, which is closed by continuity. The density also follows obviously since the Hermitian structure is locally given by smooth functions and zeros can always be smoothly perturbed.

The topological descriptions of $\nu_{k,0}(s,\tau)$ have been studied extensively. Some references include \cite{CGRS}, \cite{MP}, \cite{Wi}, and \cite{Z}. For $M=\Sigma$, the spaces $\nu_{k,0}(s,\tau)$ are completely described in \cite{BDW} and \cite{Ba}. For $s,\tau$
large enough, there is a diffeomorphism

\[\Phi_s:Hol_r(\Sigma,\mathbb{CP}^{k-1}) \rightarrow \nu_{k,0}(s,\tau),\]

\noindent where $Hol_r(\Sigma,\mathbb{CP}^{k-1})$ is the space of degree $r$ holomorphic maps from $\Sigma$
to $\mathbb{CP}^{k-1}$. (Recall that $r$ is the topological degree of the line bundle $L$). Of course, the smooth structure of $\Hol$ needs to be specified and a brief summary of relevant classical descriptions will be provided in section 2. The constructions of diffeomorphisms $\Phi_s$'s will also be provided there.

In section 4, we establish metrical relations between these two spaces and their dependence on the parameter $s$. These results are applications of our main analytic result, showing that the family $\Phi_s$ can be very well controlled. More precisely, we will see, in section 2 and 3, that $\Phi_s$ identifies a holomorphic map from $\Sigma$ to $\pk$ with a vortex $[D,\phi]\in \nu_{k,0}(s,\tau)$ via a complex gauge element in $\mathcal{G}_{\mathbb{C}}$. On a line bundle, such an element is uniquely determined by a real smooth function $u_s$ on $\Sigma$, and we show that they
exhibit convergent behaviors as $s \to \infty$. The analytic result is of independent interest. Let $H^{l,p}$ denote the Sobolev $l,p$ space on a compact Riemannian manifold $M$. Presented as the Main Theorem in section 3, the result is:

\begin{theorem} [Main Theorem]

On a compact Riemannian manifold $M$ without boundary, let $c_1$ be any constant, $c_2$ any positive constant,
and $h$ any negative smooth function. Let $c(s) = c_1 -c_2 s^2$, for each $s$ large enough,
the unique solutions $\varphi_s \in C^\infty(M)$ for the equations

\[
\Delta\varphi_{s}=c(s)-s^2he^{\varphi_s}
\]

\noindent are uniformly bounded in $\W{l}{p}$ for all $l \in \mathbb{N}$ and $p \in [1,\infty]$. Moreover, in the limit
$s \to \infty$, $\varphi_{s}$ converges smoothly (i.e. uniformly in all $H^{l,p}$) to

\[\varphi_\infty = \log\left(\frac{c_2}{-h}\right),\]

\noindent the unique solution to

\[
he^{\varphi_{\infty}}+c_{2}=0.
\]

\label{Main Theorem Intro}
\end{theorem}

This result aids us in the study of dynamics of vortices, or evolutions of metrics, first explored by
Manton (\cite{M}). There, an approximating model governed by the geodesics of a naturally defined
$L^2$ metric (or kinetic energy) on $\nu_k(1,\tau)$ is provided for the motion of vortices. This motivated a
need for descriptions of the natural $L^2$ metric in precise mathematical languages.
(See, for example, \cite{S} and \cite{Ro}.)
A more concrete description is available when $k=1$, when $\nu_1(1,\tau)$ is identified with a familiar space with
explicit coordinates.
Samols has provided a semi-explicit coordinate expression of the natural $L^2$ metric using the coordinates of the
parametrizing space. It is natural to consider what happens to the metrics as one varies the parameters $s$, $k$,
and $\tau$, and let $s$ approach infinity.
Baptista has proposed a conjecture in \cite{Ba}, asserting that the $s$-dependent $L^2$ metrics on the open subset
$\nu_{k,0}(s,\tau)$ can be pulled back to a metric on $Hol_r(\Sigma,\mathbb{CP}^{k-1})$. As $s \to \infty$, it was
conjectured that the pullback metric approaches a familiar one on $Hol_r(\Sigma,\mathbb{CP}^{k-1})$. In section 4,
we apply the Main Theorem \ref{Main Theorem} to prove this conjecture.

It is worthwhile to point out that the convergent behaviors of vortices on $\nu_{k,0}(s,\tau)$ have been
established elsewhere. In \cite{Z}, the compactness properties of vortices with uniformly bounded energies have been
thoroughly described for the more general case of symplectic vortex equations. The convergent discussions for
our particular setting have appeared in \cite{X}. The novelty of our work lies in the scrutiny of the limiting
elements in a precise analytic framework using rather elementary techniques, and the fact that our results are a
consequence of a more general theorem on the uniform regularity of solutions to a family of semilinear P.D.E. on a
general closed Riemannian manifold. The other novelty is its application toward a precise formulation and rigorous proof of Baptista's conjecture (Conjecture 5.2 in \cite{Ba})
on the dynamics of vortices, for which other established results do not seem immediately applicable.

\section{Backgrounds and Statements of the Results}
We begin by briefly summarizing the descriptions of $\nu_k(s,\tau)$. Readers familiar with constructions in \cite{Br}
and \cite{BDW} may skip to  Lemma \ref{BDW map}. One must first ensure the conditions for existences of the solutions to
the vortex equations \eqref{s-vortex} and \eqref{infinity-vortex}, or, equivalently, the non-emptiness of $\nu_k(s,\tau)$.
For a vector bundle of general rank, the non-emptiness is equivalent to a $\tau$ and $\phi$ dependent algebraic
properties on subsheaves of $E$ called $\tau$-stability. See \cite{Br1} and \cite{BDW} for detailed explanations.
Throughout this paper, we restrict our attention to rank $1$ vector bundles, or line bundles denoted by $L$.
Having no nontrivial proper subsheaf, the $\tau$-stability degenerates to a condition solely on $\tau$. By integrating
the third equation in \eqref{s-vortex}, a necessary condition for solution to exist is that

\[s^2 \tau \geq \frac{4 \pi r}{\text{vol} M}.\]

\noindent We will see that it is also sufficient. In the case $s=1$, $k=1$, and $M=\Sigma$, a Riemann surface, we have:

\[\nu_1(1,\tau)=
\begin{cases}
\emptyset &;\tau < \frac{4 \pi r}{\text{vol}\Sigma } \\
Jac^r \Sigma &; \tau = \frac{4 \pi r}{\text{vol}\Sigma } \\
Sym^r \Sigma &; \tau > \frac{4 \pi r}{\text{vol}\Sigma }, \\
\end{cases} \label{moduli spaces}\]

\noindent where $r=deg(L)$. Here, $Sym^r\Sigma$ is the space of unordered $r$ tuple of points of $\Sigma$ (or the space of divisors of $\Sigma$ with degree $r$)
and $Jac^r \Sigma$ is the Jacobian torus of $\Sigma$ parametrizing holomorphic structures of $L$. (See \cite{Br}).

The parameter $s$ does not alter the conclusion. We have seen that the effect of $s^2$ can be thought of as scaling the section $\phi$ and replacing $\tau$ by $s^2\tau$.
 This observation generalizes Bradlow's result in \cite{Br} naturally:

\begin{equation}
\nu_1(s,\tau)=
\begin{cases}
\emptyset &;s^2\tau < \frac{4 \pi r}{\text{vol}\Sigma } \\
Jac^r \Sigma &; s^2\tau = \frac{4 \pi r}{\text{vol}\Sigma } \\
Sym^r \Sigma &; s^2\tau > \frac{4 \pi r}{\text{vol}\Sigma }. \\
\end{cases} \label{s moduli spaces}
\end{equation}

The crucial step to achieve these descriptions is to switch perspective, from one in which we look for pairs $(D,\phi)$ on a
bundle with fixed unitary structure, to one in which we look for a metric on a fixed holomorphic line bundle with a
prescribed holomorphic section. In the second perspective, the analytic tools from \cite{kw} can be applied to solve for
the special metrics. The equivalence of the two perspectives is given in \cite{Br}, and we briefly summarize them here.

Let $\mathcal{C}$ be the space of holomorphic structures of $L$, that is, the collection of
$\mathbb{C}$-linear operators

\[\bar{\partial}_L: \Omega^0(L) \rightarrow \Omega^{0,1}(L)\]

\noindent satisfying the Leibiniz rule. It is a classical fact from differential geometry that given a Hermitian structure
$H$, we have $\mathcal{A}(H) \simeq \mathcal{C}$. The original approach toward solving vortex equations is to fix a
Hermitian structure $H$ and consider the following space:

\[\mathcal{N}_k:=\{(D,\phi) \in \mathcal{A}(H) \times \Omega^0(L) \times \ldots \times \Omega^0(L) \,\,| \,\,D^{(0,1)}\phi_i=0\,\, \forall i\}.\]

\noindent For a fixed $H$ this space is bijective to

\begin{equation}
\{(\bar{\partial},\phi) \in \mathcal{C} \times \Omega^0(L) \times \ldots \times \Omega^0(L) \,\,| \,\,\bar{\partial}\phi_i=0\,\, \forall i\}.
\label{second description}
\end{equation}

\noindent We then aim  to find a pair in $\mathcal{N}_k$ so that the third equation of the vortex
equations \eqref{s-vortex} is satisfied. The solvability statement we seek is:

\vspace{0.5cm}
\begin{center}
\emph{Given a Hermitian structure $H$, we find all pairs $(D,\phi)\in\mathcal{N}_k$ that solve the third equation
of \eqref{s-vortex}.}
\end{center}
\vspace{0.5cm}

Alternatively, we may start without fixing the Hermitian structure. The second description of
$\mathcal{N}_k$ \eqref{second description} above continues to make sense, and we pick an arbitrary pair
$(\bar{\partial},\phi) \in \mathcal{N}_k$. This pair determines a unique connection, and thus a unique curvature,
once a Hermitian metric $K$ is chosen. We specifically choose $K$ so that the third equation of \eqref{s-vortex} is
satisfied with this metric, and the curvature it defines:

\[\sqrt{-1}\Lambda F_{K} + \frac{s^{2}}{2}(\sum_{i=1}^k|\phi_i|^2_K-\tau)=0.\]

\noindent Here, $F_K$ is the curvature of the unique $K$-unitary connection with holomorphic structure $\bar{\p}$.
To spell out the details, the alternative approach of the problem requires us to start with the space

\[\mathfrak{T}_k=\{(\bar{\partial},\phi,K)\in \mathcal{C} \times \Omega^0(L) \times \ldots \times \Omega^0(L) \times \mathcal{H}\},\]

\noindent where $\mathcal{H}$ is the space of Hermitian structures of $L$. We fix the first two components,
and the solvability statement states the unique existence of the corresponding third component:

\vspace{0.5cm}
\begin{center}
\emph{Given a pair $(\bar{\partial},\phi) \in \mathcal{C} \times \Omega^0(L) \times \ldots \times \Omega^0(L)$ such that $\bar{\p} \phi_i=0\;\forall i$, we find all Hermitian metrics $K$ solving the third equation of \eqref{s-vortex} with the curvature and norms determined by $K$.}
\end{center}
\vspace{0.5cm}

Such an approach allows us to apply analytic techniques to solve the vortex equations. It is well known that any two
Hermitian metrics are related by a positive, self-adjoint
bundle endomorphism , i.e. by an element in the complex gauge group $\mathcal{G}_\mathbb{C}$. On a line bundle $L$,
$End(L)\simeq L\otimes L^{*}\simeq\mathcal{O}_{M}$,
so any two $C^\infty$-Hermitian metrics
on $L$, say $H$ and $K$, are related by $K=f\, H$ with $f\in C^\infty(M)$ and $f=e^{2u}>0$
for some $u \in C^\infty(M)$. Therefore, starting with a background metric $H$, finding the special metric $K$ is
equivalent to finding the unique function $u$ satisfying a certain elliptic PDE determined by the third equation
of \eqref{s-vortex}.

This alternative approach is equivalent to the original one only if we are able to build a bijection between the
two solution spaces, up to gauges. The gauge group for the alternative space is however not only $\mathcal{G}$ but
rather $\mathcal{G}_{\mathbb{C}}$, the complex gauge group. It acts on $\mathfrak{T}_k$ by

\begin{equation}
g^*(\overline{\partial}_{L},\phi,H)=(g \circ\overline{\partial}_{L}\circ g^{-1},\phi g,Hh).
\label{gauge action}
\end{equation}

\noindent Here, $h=g^*g=e^{2u}$ for a smooth real function $u$. Unlike the unitary gauge $\mathcal{G}$, this action
does not necessarily preserve the $H$-norm of $\phi$. We define

\begin{equation}
\mathcal{T}_k(s,\tau)=\{(\bar{\partial}_{L},\phi,K)\in \mathfrak{T}_k \hspace{0.1cm};\hspace{0.1cm}\eqref{s-vortex} \text{ holds with metric } K\} /\mathcal{G}_\mathbb{C}.
\label{moduli space another form}
\end{equation}

\noindent We now summarize the bijection between $\mathcal{T}_k(s,\tau)$ and $\nu_k(s,\tau)$. The proof is directly
reproduced from Proposition 3.7 in \cite{Br}, proved for $k,s=1$. However, it is by no means special to that particular value, and the
proof applies to general values of $k,s$ without any modification.

\begin{lemma}\cite{Br}
There is a bijective correspondence between $\nu_k(s,\tau)$ and $\mathcal{T}_k(s,\tau)$.
\label{identification of moduli space}
\end{lemma}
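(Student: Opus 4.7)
The strategy is to construct mutually inverse maps at the level of gauge classes. Fix once and for all the background Hermitian metric $H$ used to define $\mathcal{A}(H)$ and $\mathcal{G}$.

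For the forward map, given $(D, \phi)$ solving \eqref{s-vortex}, set $\bar{\partial} := D^{(0,1)} \in \mathcal{C}$. The first equation of \eqref{s-vortex} yields $\bar{\partial}^2 = 0$, and the second makes each $\phi_i$ holomorphic for $\bar{\partial}$. Since $D$ is the unique $H$-unitary connection whose $(0,1)$-part is $\bar{\partial}$---the Chern connection of $(\bar{\partial}, H)$---the third equation of \eqref{s-vortex} coincides with the third equation imposed on the triple $(\bar{\partial}, \phi, H)$ in the definition \eqref{moduli space another form} of $\mathcal{T}_k(s,\tau)$. The assignment $[D,\phi] \mapsto [(\bar{\partial}, \phi, H)]$ descends to gauge classes because a unitary gauge $g \in \mathcal{G}$ satisfies $g^*g = \mathrm{Id}$, so via \eqref{gauge action} it fixes $H$ and acts as an element of $\mathcal{G}_\mathbb{C}$.

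For the backward map, given a representative $(\bar{\partial}, \phi, K)$, write $K = e^{2u} H$ for the unique real $u \in C^\infty(M)$ and take $g = e^{-u} \in \mathcal{G}_\mathbb{C}$, so that $g^*g = e^{-2u}$ and the action \eqref{gauge action} carries the triple to a complex-gauge-equivalent one $(\tilde{\bar{\partial}}, \tilde{\phi}, H)$ based at the reference metric $H$. Let $\tilde{D}$ be the Chern connection of $(\tilde{\bar{\partial}}, H)$; then $\tilde{D}$ is $H$-unitary with $\tilde{D}^{(0,1)}\tilde{\phi}_i = 0$, and I set $[(\bar{\partial}, \phi, K)] \mapsto [(\tilde{D}, \tilde{\phi})]$. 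Two representatives of the same $\mathcal{G}_\mathbb{C}$-orbit, when reduced to the metric $H$ in this way, differ by a complex gauge transformation fixing $H$, hence by an element of $\mathcal{G}$; so the map descends to $\mathcal{G}_\mathbb{C}$-classes. That the two maps are mutually inverse is then essentially tautological, since each canonically produces (or recognizes) the reduced triple $(\bar{\partial}, \phi, H)$ together with its Chern connection $D$.

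The genuine content of the lemma---and the step I expect to be the principal calculation---is that the third equation of \eqref{s-vortex} is invariant under the complex gauge action \eqref{gauge action}. Computing in a local holomorphic frame for $\bar{\partial}$, one verifies the identities $F_{\tilde{D}} = F_{D_K}$ and $|\tilde{\phi}_i|_H^2 = |\phi_i|_K^2$, where $D_K$ is the Chern connection of $(\bar{\partial}, K)$; substituting these into the equation for $(\tilde{\bar{\partial}}, \tilde{\phi}, H)$ reproduces exactly the equation for $(\bar{\partial}, \phi, K)$. This is formally identical to Bradlow's verification in \cite{Br} for $k = s = 1$: the sum $\sum_i |\phi_i|^2$ and the overall factor $s^2$ enter linearly in the third equation and require no extra care, so the argument transports verbatim.
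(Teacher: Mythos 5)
Your argument is correct and is essentially the paper's own (which likewise only sketches the construction and defers the verification to Bradlow's Proposition 3.7): the forward map is $[D,\phi]\mapsto[D^{(0,1)},\phi,H]$, the backward map uses the conformal factor relating $K=e^{2u}H$ to $H$ followed by the Chern connection, and the substance is the complex-gauge invariance of the third equation via the identities $F_{\tilde D}=F_{D_K}$ and $|\tilde\phi_i|_H^2=|\phi_i|_K^2$, which you correctly isolate. One small caveat: those two identities force $\tilde\phi=e^{+u}\phi$ (so that its $H$-norm equals the $K$-norm of $\phi$) together with the holomorphic structure for which $e^{u}$ times the old holomorphic frame is holomorphic, whereas your choice $g=e^{-u}$ fed into the paper's literal formula \eqref{gauge action} yields $\tilde\phi=e^{-u}\phi$ --- a sign mismatch inherited from the paper's own loosely stated gauge action, and harmless here since you state the correct target identities.
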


\begin{proof} \emph{(Sketch)}
To define the forward map $P_s:\nu_k(s,\tau) \to \mathcal{T}_k(s,\tau)$, we take $[D,\phi] \in \nu_k(s,\tau)$.
The integrability of $D$ implies that its anti-holomorphic part $D^{(0,1)}$ defines a holomorphic structure, and
we define

\[P_s([D,\phi])= [D^{(0,1)},\phi ,H],\]

\noindent where $H$ is the background metric for which $D$ is $H$-unitary. For the inverse map $G_s$, take
$[\bar{\partial}_{L},\phi,K]\in\mathcal{T}_k(s,\tau)$. The Hermitian metric $K$ on $L$ is related to $H$ by $K=e^{2u}H$,
and $g=e^u$ acts on holomorphic
structure and sections as in \eqref{gauge action}. We define

\[G_s([\bar{\partial}_{L},\phi,K])=[D(g^*\bar{\partial}_{L},H),\phi\circ g],\]

\noindent where $D(g^*\bar{\partial}_{L},H)$ is the metric connection of $H$ with holomorphic
structure $g^*\bar{\partial}_{L}$. That the pair $(D(g^*\bar{\partial}_{L},H),\phi\circ g)$ solves the vortex
equation \eqref{s-vortex} and that $P_s$ and $G_s$ are inverse to each other are proved in \cite{Br}.
\end{proof}

The alternative perspective yields a much more intuitive understanding of Bradlow's description of $\nu_1(1,\tau)$
for large $\tau$. An element $<z_1,\ldots,z_r>\in Sym^r\Sigma$ uniquely determines a pair $(\bar{\partial},\phi)$
with $\bar{\partial}\phi = 0$, up to $\mathcal{G}_{\mathbb{C}}$ action, that vanishes precisely at these points.
The identification

\[\mathcal{T}_1(1,\tau) \simeq Sym^r \Sigma\]

\noindent is achieved once we ensure that the third component $K$ is uniquely determined by the first two, up to
$\mathcal{G}_{\mathbb{C}}$.

With the identification in Lemma \ref{identification of moduli space}, finding $(D,\phi)$
to satisfy equation \eqref{s-vortex} is equivalent to fixing a holomorphic structure $\bar{\partial}_L$, a
holomorphic section $\phi$, and finding a special metric $K_s=He^{2u_s}$ so that equation \eqref{s-vortex} is satisfied
 with this metric. As we have claimed, this turns the third equation in \eqref{s-vortex}, which is a tensorial one,
  into a scalar equation of $u_s$. Moreover, it turns the question of understanding the limiting behaviors of vortices
  into analyzing the convergent behaviors of $u_s$.

Before we describe $\nu_k(s,\tau)$ for general $k$, we observe that near the adiabatic limit $s=\infty$, the third possibility in \eqref{s moduli spaces}
prevails. As we are mainly interested in the asymptotic behaviors of vortices, that possibility will be the focus
of our attention, and $\tau$ dependence becomes insignificant. We will therefore assume $\tau=1$ and write $\nu_k(s)$
instead of $\nu_k(s,1)$ from now on.

\[\nu_k(s) := \nu_k(s,1),\]

\noindent and

\[\nu_{k,0}(s) := \nu_{k,0}(s,1)\]

\noindent for large values of $s$.

The generalized description to \eqref{s moduli spaces} is given in \cite{BDW}. We are particularly interested in
the open subset $\nu_{k,0}(s)$ of $\nu_k(s)$ defined in Definition \ref{introduction non vanishing sections}.
Let $\mathcal{T}_{k,0}(s)$ be the corresponding open subset of $\mathcal{T}_k(s)$ via the identification in
Lemma \ref{identification of moduli space}. It is obvious that $\nu_{k,0}(\infty)=\nu_k(\infty)$ since the third
equation of \eqref{infinity-vortex} prohibits simultaneous vanishing of the $k$ sections.
It is also clear that $\nu_{1,0}(s)$ is empty for all $s<\infty$, since any global holomorphic section of a line
bundle with degree $r$ must vanishes exactly at $r$ points, counting multiplicities. This is not the case when we
have more than one section. In fact, it has been shown in \cite{BDW} that

\begin{equation}
 Hol_r(\Sigma,\mathbb{CP}^{k-1})\simeq \nu_{k,0}(1),
\label{BDW identification}
\end{equation}

\noindent where the equivalence above is in fact a diffeomorphism, under the initial assumption $r>2b-2$. We hereby provide a brief description of the manifold structure of $\Hol$ in this circumstance. It is a classical fact that for $r>2b-2$, $\Hol$ is a smooth manifold of complex dimension

\[m=kr-(k-1)(b-1).\]

\noindent Every $f \in \Hol$ is of the form

\[f=[f_1,\ldots,f_k],\]

\noindent where each $f_j$ is a meromorphic function on $\Sigma$. Since $f$ is of degree $r$, each $f_j$ vanishes exactly on a divisor $E_j \in Sym^r(\Sigma)$. The space $Sym^r(\Sigma)$ is locally diffeomorphic to $\mathbb{C}^r$, by identifying an unordered $r$-tuple $<z_1,\ldots,z_r>$ with the coefficients of the monic polynomial $(z-z_1)\cdots(z-z_k)$. Each $f \in \Hol$ is then associated with an element in $Sym^r(\Sigma)\times \ldots \times Sym^r(\Sigma)$, a complex manifold of dimension $kr$. Clearly, not every $(E_1,\ldots,E_k) \in Sym^r(\Sigma)\times \ldots \times Sym^r(\Sigma)$ determines a holomorphic map. An immediate restriction is that

\begin{equation}
E_1 \bigcap \ldots \bigcap E_k = \emptyset,
\label{empty intersection}
\end{equation}

\noindent and therefore we restrict our attention to

\[Div_r^k := \{(E_1,\ldots,E_k) \in Sym^r(\Sigma)\times \ldots \times Sym^r(\Sigma) \;\;|\;\;\bigcap_{i=1}^k E_i = \emptyset\},\]

\noindent which is still of dimension $rk$ since $Div_r^k$ is clearly an open subset of $Sym^r(\Sigma)\times \ldots \times Sym^r(\Sigma)$. The only other condition for $(E_1,\ldots,E_k) \in Div_r^k$ to determine a unique holomorphic map is given in Corollary 1.10 in \cite{KM}. It requires that

\begin{equation}
\mu_r(E_1) = \ldots = \mu_r(E_k).
\label{condition for Abel Jacobi}
\end{equation}

\noindent The map $\mu_r$ is the generalized Abel-Jacobi map. Precisely, for $E_j = p_j^1 + \ldots + p_j^r$, we define

\begin{equation}
\mu_r(E_j) := \mu(p_j^1) + \ldots \mu(p_j^r),
\label{definition of Abel Jacobi}
\end{equation}

\noindent where $\mu : \Sigma \to \mathbb{C}^b/\mathbb{Z}^{2b} \simeq (\mathbb{S})^{2b}$ is the classical Abel-Jacobi map. For $r>2b-1$, the rank of the differential of $\mu_k$ is of rank $b-1$ (cf. Proposition V.4.7 of \cite{Gr}). The defining condition \eqref{condition for Abel Jacobi} for the space of $k$ tuples of divisors corresponding to $\Hol$ then consists of $k-1$ equations defined by maps with differentials of rank $b-1$. This correspondence therefore defines a manifold structure of $\Hol$, and \eqref{condition for Abel Jacobi} then reduces the original dimension $kr$ by $(k-1)(b-1)$.  More details can be found in section 1 of \cite{KM} and section 5.4 of \cite{Mi}.

With these preliminary knowledge recalled, we state

\begin{lemma} \cite{Ba}
For each $s \in [1,\infty]$, there is a diffeomorphism

\[\Phi_s: Hol_r(\Sigma,\mathbb{CP}^{k-1}) \to \nu_{k,0}(s),\]

\noindent in the smooth structures described immediately above (for $\Hol$) and section 1(for $\nu_{k,0}(s)$).

\label{BDW map}

\end{lemma}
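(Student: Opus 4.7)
The plan is to construct $\Phi_s$ via the alternative moduli space $\mathcal{T}_{k,0}(s)$ together with the bijection from Lemma \ref{identification of moduli space}. Given $f = [f_1, \ldots, f_k] \in \Hol$, pulling back $\Oone$ with its tautological $k$-tuple of sections produces a degree $r$ holomorphic line bundle on $\Sigma$ together with $k$ holomorphic sections $\phi_1, \ldots, \phi_k$ whose common vanishing locus is empty. Because smooth complex line bundles on $\Sigma$ are classified by degree and any two smooth isomorphisms with the fixed bundle $L$ differ by an element of $\mathcal{G}_\mathbb{C}$, this produces a well-defined class $[(\bar\partial_L, \phi)] \in (\mathcal{C} \times \Omega^0(L)^k)/\mathcal{G}_\mathbb{C}$ with $\bar\partial_L \phi_i = 0$ for each $i$ and $\bigcap_i \phi_i^{-1}(0) = \emptyset$.

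To upgrade this to an element of $\mathcal{T}_{k,0}(s)$ I would search within the complex gauge orbit for a Hermitian metric $K_s = e^{2u_s}H$, with $H$ a fixed background, satisfying the third equation of \eqref{s-vortex}. Using the transformation rule $F_{K_s} = F_H - 2\partial\bar\partial u_s$, this reduces to a scalar semilinear PDE of the form
\[
\Delta u_s = -\sqrt{-1}\Lambda F_H - \frac{s^2}{2}\bigl(e^{2u_s}|\phi|^2_H - 1\bigr),
\]
where $|\phi|^2_H = \sum_{i=1}^k |\phi_i|^2_H$. For each finite $s$ in the admissible range derived earlier, existence and uniqueness of a smooth solution $u_s$ is a Kazdan--Warner statement, accessible by monotone iteration between explicit sub- and super-solutions, or equivalently by minimizing a strictly convex functional. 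For $s = \infty$ the equation degenerates to the algebraic relation $e^{2u_\infty}|\phi|^2_H = 1$, whose solution $u_\infty = -\frac{1}{2}\log|\phi|^2_H$ is smooth precisely under the nonvanishing hypothesis. I then set $\Phi_s(f) := G_s([\bar\partial_L, \phi, e^{2u_s}H]) \in \nu_{k,0}(s)$, with $G_s$ the inverse map from Lemma \ref{identification of moduli space}.

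The inverse map is the projectivization $[D,\phi] \mapsto [\phi_1 : \cdots : \phi_k]$, which is $\mathcal{G}$-invariant, defined on all of $\nu_{k,0}(s)$ by the nonvanishing condition, and lands in $\Hol$ because each $\phi_i$ is $D^{(0,1)}$-holomorphic. Smoothness of $\Phi_s$ and its inverse follows from (i) smooth dependence of $u_s$ on the pair $(\bar\partial_L, \phi)$, obtained from the implicit function theorem applied to the linearization $\Delta + s^2 e^{2u_s}|\phi|^2_H$, which is strictly positive and hence invertible, and (ii) the smooth manifold structures recalled above for $\Hol$ via the Abel--Jacobi description and for $\nu_{k,0}(s)$ as an open subset of a symplectic quotient; a dimension count $2kr - 2(k-1)(b-1)$ matches on both sides. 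The principal obstacle is regularity and \emph{uniform} parameter dependence of $u_s$ across the full range $s \in [1,\infty]$, especially as $s \to \infty$ where the equation loses ellipticity; for a single finite $s$ this is routine elliptic theory, but the joint smoothness needed to include the endpoint $s = \infty$ is precisely what the Main Theorem of Section 3 will supply.
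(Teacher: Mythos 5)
Your proposal follows essentially the same route as the paper: pull back $\mathcal{O}_{\mathbb{CP}^{k-1}}(1)$ with its hyperplane sections, reduce the third vortex equation to a scalar Kazdan--Warner equation for the conformal factor $e^{2u_s}$ solved by sub/super-solutions and monotone iteration, pass through Bradlow's identification of Lemma \ref{identification of moduli space}, and invert by projectivization of the nonvanishing sections. The paper likewise defers the analytic content (existence, uniqueness, and the behavior at $s=\infty$) to Theorem \ref{Identification of open subset} and the Main Theorem of Section 3, so your account is a faithful match, with only cosmetic differences (e.g.\ invoking the implicit function theorem for smooth dependence where the paper argues via smooth perturbation of divisors).
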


\begin{proof} (\emph{Sketch})

We only sketch the outline of the construction of $\Phi_s$. Details and justifications are provided in section 3.

The inverse map $\Phi_s^{-1}$ is obvious. For $k$ sections $\phi = (\phi_1,\ldots,\phi_k)$ without common
 zeros, we can construct maps from $\Sigma$ to $\mathbb{CP}^{k-1}$ defined by

\begin{equation}
\Phi_s^{-1}([D,\phi])(z):=\tilde{\phi}(z)=[\phi_1(z),\ldots,\phi_k(z)].
\label{definition of Phi s inverse}
\end{equation}

\noindent The right hand side of \eqref{definition of Phi s inverse} is well defined, as $\phi_1(z),\ldots,\phi_k(z)$
are never zeros simultaneously. Moreover, on a $U(1)$ line bundle, the transition map multiplies each section by a
uniform nonzero scalar. Therefore, \eqref{definition of Phi s inverse} is a globally defined holomorphic map from
$\Sigma$ to $\mathbb{CP}^{k-1}$.

The construction of the forward map is also standard. We start with a holomorphic map $\tilde{\phi}\in \Hol$. Consider $\mathcal{O}_{\mathbb{CP}^{k-1}}(1)$, the anti-tautological line bundle over $\pk$ with hyperplane sections $s_1,\ldots,s_k$. Each $s_j$ vanishes precisely on the hyperplane defined by $z_j=0$.
Let $L=\tilde{\phi}^*\mathcal{O}_{\mathbb{CP}^{k-1}}(1)$ be the pullbacked line bundle on $\Sigma$ endowed with sections $\phi=(\phi_1,\ldots,\phi_k)\in \Omega^0(L)^{\oplus k}$ by pulling back
$s_1,\ldots,s_k$ via $\tilde{\phi}$. The map $\tilde{\phi}$ also endows a
holomorphic structure $\bar{\partial}_L$ and a background metric $H$ on $L$ when a background metric is given on $\mathcal{O}_{\mathbb{CP}^{k-1}}(1)$. The first part of section 3 is to modify Bradlow's arguments
 in \cite{Br} to look for a special metric $H_s$, related to $H$ by a gauge transformation $H_s=He^{2u_s}$, where $u_s$
 is a positive smooth function. The vortex equation \eqref{s-vortex} is to be satisfied if $H$ is replaced by $H_s$.
 The triplet $[\bar{\partial}_L,\phi,H_s]\in \mathcal{T}_{k,0}(s)$ corresponds via Bradlow's identification in
 Lemma \ref{identification of moduli space} to $[D_s,e^{u_s}\phi] \in \nu_{k,0}(s)$, where $D_s$ is the metric
 connection with respect to holomorphic structure $e^{u_s}\circ \bar{\partial}_L \circ e^{-u_s}$ and the Hermitian
 metric $H$, and we define

\begin{equation}
\Phi_s(\tilde{\phi})=[D_s,e^{u_s}\phi].
\label{definition of Phi s}
\end{equation}

Both $\Phi_s$ and $\Phi_s^{-1}$ are smooth in the smooth structures provided above. Indeed,  a holomorphic map from $\Sigma$ to $\pk$ is labeled by the $k$ tuple of divisors characterizing the zeros of each of its components. Perturbing the zeros smoothly results in smooth variation of corresponding pullback $k$ sections on $L \to \Sigma$, and therefore $\Phi_s$ is smooth. The smoothness of $\Phi_s^{-1}$ is obvious.

The unique existence of $u_s$ is guaranteed by the following theorem, which is proved with essentially identical reasonings from Lemma 4.1, Theorem 4.2, and Theorem 4.3 in \cite{Br}:

\begin{theorem}
[Existence and Uniqueness of $u_s$]

Fix $s^2 \in [\frac{4\pi r}{Vol \Sigma},\infty]$ and a Hermitian line bundle $(L,H)$ over $\Sigma$. Given a
holomorphic structure $\bar{\p}_L$ of $L$ and $k$ sections $\phi=(\phi_i)_i$ so that

\[\bar{\p}_L\phi_i=0 \;\;  \forall i,\]

\noindent there exists $u_s \in \mathcal{C}^\infty$ such that $[D_s,e^{u_s}\phi]\in
\left(\mathcal{A}(H)\times \Omega^0(L)^{\oplus k}\right)/\mathcal{G}$ solves the vortex equation \eqref{s-vortex} (i.e. $[D_s, e^{u_s}\phi] \in \nu_k(s)$).
\label{Identification of open subset}
\end{theorem}

The only generalization of Theorem \ref{Identification of open subset} from the particular cases in \cite{Br}
is the introduction of the parameter $s$ (in \cite{Br}, $s=1$). However, no significant modification of the original proofs is required. Nevertheless, to clarify the geometric and analytic role of $s$ as it approaches the asymptotic value, we must reproduce arguments from \cite{Br} and \cite{kw}. The proof to Theorem \ref{Identification of open subset} commences at the beginning of section 3.

Pending the proof of Theorem \ref{Identification of open subset} above and analytic details of the correspondence $\Phi_s$, to be presented in section 3, the sketch of the proof of Lemma \ref{BDW map} is now complete.

\end{proof}

The Main Theorem of this paper, Theorem \ref{Main Theorem}, further establishes significant controls of these real smooth functions $u_s$ uniquely determining the complex gauges.

\begin{theorem}
[Conclusion of the Main Theorem]
The functions $u_s$ converges to $u_\infty$ in all Sobolev spaces, and therefore smoothly as $s \to \infty$.
\end{theorem}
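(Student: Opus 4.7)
The plan is to reduce this convergence assertion to a direct application of the Main Theorem. First, substitute $H_s = H e^{2u_s}$ into the third equation of \eqref{s-vortex} with $\tau = 1$. Using the standard K\"ahler identity relating $\sqrt{-1}\Lambda \partial \bar{\partial}$ to the Laplacian on functions, the change of metric gives
$$\sqrt{-1}\Lambda F_{H_s} = \sqrt{-1}\Lambda F_H - \Delta u_s,$$
so, after setting $\varphi_s := 2u_s$ and $h := -\sum_{i=1}^k |\phi_i|_H^2$, the third vortex equation rearranges to
$$\Delta \varphi_s \;=\; 2\sqrt{-1}\Lambda F_H - s^2 - s^2 h e^{\varphi_s}.$$
Crucially, because $\phi_1, \ldots, \phi_k$ have no common zero, $h$ is smooth and strictly negative everywhere on $\Sigma$; this is exactly where membership in the open subset $\nu_{k,0}(s)$ enters.

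To match the form $\Delta \varphi_s = c(s) - s^2 h e^{\varphi_s}$ with $c(s) = c_1 - c_2 s^2$ required by the Main Theorem, one takes $c_2 = 1$ and $c_1 = 2\sqrt{-1}\Lambda F_H$. The hypothesis that $c_1$ be constant is arranged by choosing the background Hermitian metric $H$ to be a Hermite--Einstein metric on $L$, so that $\sqrt{-1}\Lambda F_H$ equals the constant $2\pi r / \mathrm{vol}(\Sigma)$ by Chern--Weil; such an $H$ exists and is unique up to a positive scalar multiple, and changing $H$ only alters the family $u_s$ by a fixed smooth shift that preserves Sobolev convergence. Alternatively, the proof of the Main Theorem extends with only routine modifications when $c_1$ is replaced by a smooth bounded function on $\Sigma$.

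With these identifications in place, the Main Theorem applies directly, producing uniform $H^{l,p}$-convergence of $\varphi_s$ in every Sobolev space to
$$\varphi_\infty = \log\left(\frac{1}{\sum_{i=1}^k |\phi_i|_H^2}\right).$$
Setting $u_\infty := \varphi_\infty / 2$, one checks that $e^{2u_\infty}\sum_i |\phi_i|_H^2 = 1$, so $H_\infty = H e^{2u_\infty}$ is precisely the metric realizing $[\bar{\partial}_L, \phi, H_\infty]\in \mathcal{T}_k(\infty)$, and the limit $\Phi_\infty(\tilde{\phi}) = [D_\infty, e^{u_\infty}\phi]$ of $\Phi_s(\tilde{\phi})$ is recovered.

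The main obstacle is purely bookkeeping: tracking the K\"ahler sign convention for $\sqrt{-1}\Lambda \partial \bar{\partial}$ in the change-of-metric formula and justifying that $c_1$ may be taken constant (or that the Main Theorem tolerates a smooth $c_1$). Once the equation is cast in the form of the Main Theorem, no further analysis is required; convergence of $u_s \to u_\infty$ in every Sobolev space follows as an immediate corollary.
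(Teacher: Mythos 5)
Your proposal is correct and is essentially the paper's own argument: choosing a Hermite--Einstein background metric $H'=He^{2\psi}$ is literally the same normalization the paper performs by solving $\Delta\psi=\sqrt{-1}\Lambda F_{H}-c_{1}$ in \eqref{equaiton for psi} and setting $\varphi_{s}=2(u_{s}-\psi)$, which absorbs $e^{2\psi}$ into $h$ and makes $c_{1}$ a genuine constant so that the Main Theorem applies verbatim; convergence of $u_{s}=\tfrac{1}{2}\varphi_{s}+\psi$ then follows since $\psi$ is fixed. Your alternative suggestion (re-proving the Main Theorem with a non-constant smooth $c_{1}$) is unnecessary and would require reworking the sub/super-solution construction, but your primary route needs no such extension.
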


This Theorem proves a conjecture posed by Baptista in \cite{Ba} on dynamics of vortices. On $\nu_{k,0}(s)$, we
consider the natural $L^2$ metric given as follows. For each $(D_{s},\phi_{s}) \in \mathcal{A}(H) \times \Omega^0(L)^{\oplus k}$, we define

\begin{equation}
g_{s}((\dot{A_{s}},\dot{\phi_{s}}),(\dot{A_{s}},\dot{\phi_{s}}))=\int_{M}\frac{1}{2s^{2}}\dot{A_s}\wedge \bar{*}
\dot{A_s}+<\dot{\phi_{s}},\dot{\phi_{s}}>_{H}vol_M,
\label{L 2 metric}
\end{equation}

\noindent where $(\dot{A}_s,\dot{\phi}_s)$ is an element of $T_{(D_s,\phi_s)}\left(\mathcal{A}(H) \times \Omega^0(L)^{\oplus k}\right) \simeq \Omega^1(\Sigma)\times \Omega^0(L)^{\oplus k}$, chosen orthogonally to the gauge transformation. The second term of the
 integrand makes sense since the tangent space to sections is identified with itself, and we adopt the notation

\[<\phi,\psi>_H := \sum_{i=1}^k \left<\phi_i,\psi_i \right>_H.\]

\noindent Picking the tangent vectors in directions orthogonal to gauge transformations, $g_s$ descends to a metric on
$\nu_{k,0}(s)$, which is identified with  $Hol_r(\Sigma,\mathbb{CP}^{k-1})$ via $\Phi_s$. We then pull back
$g_s$ via $\Phi_s$ to a metric on $Hol_r(\Sigma,\mathbb{CP}^{k-1})$ and try to compare it with the ordinary $L^2$
metric of the space of holomorphic maps. Baptista's conjecture, a rather holistic statement, is stated roughly as follows:

\begin{conjecture} [Conjecture 5.2 in \cite{Ba}]
The pull back metrics $\Phi_s^*g_s$ converge pointwise to a multiple of the ordinary $L^2$ metric of the space of holomorphic maps.
\end{conjecture}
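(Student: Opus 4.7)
The plan is to evaluate $\Phi_s^{\ast}g_s(v,v)$ at a fixed tangent vector to $\Hol$ and pass to the limit $s\to\infty$, using the smooth convergence $u_s \to u_\infty$ from the Main Theorem (Conclusion) together with an analogous convergence $\dot u_s \to \dot u_\infty$ for the $\Hol$-directional derivative of $u_s$. Fix $\tilde\phi \in \Hol$ and $v \in T_{\tilde\phi}\Hol$ represented by a smooth family $\tilde\phi_t$. Pulling back from $\pk$ yields a $t$-family of holomorphic pairs $(\bar\p_L(t), \phi(t))$ on $L$, and by Theorem~\ref{Identification of open subset} a unique family of smooth scalar solutions $u_s(t)$. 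Denote by $\dot u_s$, $\dot{\bar\p}_L$, $\dot\phi$ the $t$-derivatives at $t=0$. Projecting on the Coulomb slice transverse to the gauge orbits yields $d\Phi_s(v) = (\dot A_s, \dot\psi_s)$ as an explicit algebraic expression in these variations and the background data.

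The central analytic step is to establish uniform Sobolev bounds and smooth convergence for $\dot u_s$. Differentiating the scalar vortex equation
\[
\Delta u_s = c_0 + \tfrac{s^{2}\tau}{2} - \tfrac{s^{2}}{2}\, |\phi|_H^{2}\, e^{2u_s}
\]
in $t$ produces the linear elliptic equation
\[
\bigl(\Delta + s^{2}\,|\phi|_H^{2}\, e^{2u_s}\bigr)\dot u_s \;=\; R_s,
\]
where the inhomogeneity $R_s$ is $O(s^{2})$ in every Sobolev norm and the zeroth-order coefficient is bounded below by $c\, s^{2}$, using the no-common-zero bound $|\phi|_H^{2} \geq c_1 > 0$ together with the $L^\infty$ bound on $u_s$ from the Main Theorem. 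Pairing with $\dot u_s$ and integrating, the $s^{2}$ factors cancel, yielding $\|\dot u_s\|_{L^{2}} = O(1)$; iterating elliptic bootstrapping and the uniform control of $u_s$ lifts this to uniform $H^{l,p}$ bounds. Passing to the limit in the linearized equation identifies the smooth limit $\dot u_\infty$ with the variation of $u_\infty = \tfrac12 \log\!\bigl(\tau/|\phi|_H^{2}\bigr)$ along $v$.

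With these estimates in hand, the pullback metric decomposes as
\[
\Phi_s^{\ast}g_s(v,v) \;=\; \tfrac{1}{4s^{2}}\|\dot A_s\|_{L^{2}}^{2} + \|\dot\psi_s\|_{L^{2}}^{2}.
\]
The curvature piece is $O(s^{-2})$ since $\dot A_s$ stays uniformly bounded in $L^{2}$, and so vanishes in the limit. The section piece converges, by dominated convergence, to $\|\dot\psi_\infty\|_{L^{2}}^{2}$ with $\psi_\infty = e^{u_\infty}\phi$. Because $\sum_{j}|e^{u_\infty}\phi_j|_H^{2} \equiv \tau$, the tuple $\psi_\infty$ is the lift of $\tilde\phi$ to the sphere of radius $\sqrt\tau$ in $\CC^{k}$, and $\dot\psi_\infty$ is the tangent lift of $v$ compatible with the Fubini--Study metric on $\pk$. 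Hence $\|\dot\psi_\infty\|_{L^{2}}^{2}$ equals $\tau$ times the standard $L^{2}$ norm of $v$ coming from Fubini--Study on $\pk$ and the K\"ahler form on $\Sigma$, and polarization gives the pointwise convergence $\Phi_s^{\ast}g_s \to \tau\, g_{L^{2}}^{FS}$.

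The principal obstacle is the uniform control of $\dot u_s$: both sides of its defining equation scale like $s^{2}$, and only the cancellation between the zeroth-order $s^{2}$ term on the left and the $s^{2}$ factor on the right rescues a limit. This cancellation relies on the strict positivity of $|\phi|_H^{2}\, e^{2u_s}$ established uniformly in $s$ via the Main Theorem, and should be viewed as the linearized companion of the Main Theorem itself. Once this uniform invertibility of the linearization is in place, the remaining identifications (Fubini--Study lift of $\dot\psi_\infty$, vanishing of the curvature contribution) are essentially formal.
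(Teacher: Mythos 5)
Your overall strategy --- linearize the scalar equation in the moduli direction, control $\dot u_s$ uniformly in $s$, and pass to the limit in the explicit formula for $\Phi_s^*g_s$ --- is a legitimate alternative to the paper's route, but as written it has a genuine gap at its central analytic step. The energy estimate you describe does give $\|\dot u_s\|_{L^2}=O(1)$ (and the maximum principle for $\Delta-s^2V$ with $V=|\phi|_H^2e^{2u_s}\ge c>0$ even gives $\|\dot u_s\|_{L^\infty}=O(1)$). But the same pairing gives only
\begin{equation*}
\int_\Sigma|\nabla\dot u_s|^2\;\le\;\Big|\int_\Sigma R_s\,\dot u_s\Big|\;=\;O(s^2)\cdot O(1),
\end{equation*}
i.e.\ $\|\nabla\dot u_s\|_{L^2}=O(s)$, not $O(1)$. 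Since $\dot A_s$ is built from $\partial_z\dot u_s$, the curvature term $\frac{1}{4s^2}\|\dot A_s\|_{L^2}^2$ is then only $O(1)$ rather than $o(1)$, so the claimed vanishing of the curvature piece is unsupported. The proposed repair, ``iterating elliptic bootstrapping,'' does not work either: writing $\Delta\dot u_s=R_s+s^2V\dot u_s$ and applying standard elliptic estimates reintroduces a factor of $s^2$ at every step (e.g.\ $\|\dot u_s\|_{H^{2,p}}\lesssim\|R_s\|_{L^p}+s^2\|V\dot u_s\|_{L^p}=O(s^2)$), because the zeroth-order coefficient of the linearized operator blows up like $s^2$. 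The only way to cancel these factors is to exploit the near-equality $\dot u_s\approx -R_s/(s^2V)$, i.e.\ to subtract an explicit algebraic approximate solution of the linearized equation and run a maximum-principle argument on the difference and on each of its derivatives. That is precisely the mechanism the paper uses (there applied directly to the $w$-derivatives $\partial^R\tilde\varphi_s-\partial^R\tilde v_s$ of the nonlinear solutions and the explicit approximants $\tilde v_s$, in Lemma \ref{smooth convergence of function with approximations t}); your linearized version would need the same machinery, so the ``essentially formal'' remainder you defer is in fact the heart of the proof.

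Two smaller points. First, the paper normalizes so that $\sum_i|\phi_{i,t}|^2_{H_t}\equiv 1$ for all $t$ (the sections are pulled back from the unit sphere in $\mathcal{O}_{\mathbb{CP}^{k-1}}(1)$); with that normalization the inhomogeneity in the linearized equation is actually $O(1)$, not $O(s^2)$, and the crude energy estimate then already yields $\|\nabla\dot u_s\|_{L^2}=O(s^{-1})$, which would rescue your curvature estimate --- but you do not invoke this normalization where it is needed. Second, the variation of the section $e^{u_{s,t}}\phi_t$ contains the term $\dot u_s e^{u_s}\phi$ in addition to $e^{u_s}\dot\phi$, and the identification of the limiting $\|\dot\psi_\infty\|_{L^2}^2$ with the Fubini--Study $L^2$ norm requires disposing of the residual component along $i\psi_\infty$ (the infinitesimal $U(1)$ direction); the paper handles this via the Euler sequence identification of $T\mathbb{CP}^{k-1}$, whereas your proposal asserts the compatibility without argument. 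Note also that the paper proves more than the pointwise statement: it establishes Cheeger--Gromov convergence in all $H^{l,p}$, which is why it must control all $w$-derivatives $\partial^R$ of the metric coefficients, not just the first variation.
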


In section 4, we provide a precise formulation of this conjecture, as well as a precise convergent statement of the pullbacked metrics, before rigorously proving it.

\section{Main Constructions}
We first prove Theorem \ref{Identification of open subset}. To do so, we identify the equations for the unique gauges
$e^{u_s}$, that transform the initial data into solutions of vortex equations \eqref{s-vortex}. Before we begin, we state the well-known maximum principle for invertible elliptic operators, which is crucial in our analytic derivations.

\begin{lemma}
[Maximum Principle]
For the elliptic operator $L=\Delta -k$, where $k$ is any smooth positive function, the following is true: for
any $p > dim M$, if $u \in \W{2}{p}$ satisfies $Lu \geq 0$, then $u \leq 0$.
\label{Maximum Principle}
\end{lemma}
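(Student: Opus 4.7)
The plan is to give an integral, weak form of the classical maximum principle so as to sidestep the fact that $u$ is not assumed $C^{2}$. First I would record the regularity: since $p>n:=\dim M$, Morrey's embedding $H^{2,p}(M)\hookrightarrow C^{1,\alpha}(M)$ with $\alpha=1-n/p$ makes $u$ continuous and $\nabla u$ bounded, so $u^{+}:=\max(u,0)$ is well-defined and lies in $H^{1,2}(M)$ with $\nabla u^{+}=\chi_{\{u>0\}}\nabla u$ almost everywhere, by a standard Stampacchia-type truncation lemma.

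The core of the argument will be to test the inequality $Lu=\Delta u-ku\geq 0$ against the non-negative function $u^{+}$ and integrate over the closed manifold $M$. With the analyst's sign convention $\Delta=\operatorname{div}\operatorname{grad}$ (so that $-\Delta$ is non-negative on a closed manifold) and no boundary terms, integration by parts against a Sobolev test function yields
\[
0\;\leq\;\int_{M}(Lu)\,u^{+}\,dV_{M}\;=\;-\int_{M}\la\nabla u,\nabla u^{+}\ra\,dV_{M}-\int_{M}k\,u\,u^{+}\,dV_{M}\;=\;-\int_{M}|\nabla u^{+}|^{2}\,dV_{M}-\int_{M}k\,(u^{+})^{2}\,dV_{M}.
\]
The right-hand side is manifestly $\leq 0$ because $k>0$, while the leftmost quantity is $\geq 0$, so every term must vanish. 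In particular $\int_{M}k\,(u^{+})^{2}\,dV_{M}=0$ together with the strict positivity of $k$ forces $u^{+}\equiv 0$, i.e.\ $u\leq 0$ pointwise on $M$.

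The only genuine technical obstacle is justifying the integration by parts when $u$ is merely in $H^{2,p}$ rather than $C^{2}$. I would handle it by density: approximate $u$ in the $H^{2,p}$ norm by smooth functions $u_{n}\in C^{\infty}(M)$, for which the identity above holds classically (with $u_{n}^{+}$ replacing $u^{+}$), and then pass to the limit using the continuity of $v\mapsto v^{+}$ and $v\mapsto\nabla v^{+}$ on $H^{1,2}$. Equivalently, one may simply invoke the Dirichlet-form identity $(\Delta u,v)_{L^{2}}=-(\nabla u,\nabla v)_{L^{2}}$ for $u\in H^{2,p}\subset H^{1,2}$ and $v=u^{+}\in H^{1,2}$, which is exactly what the test-function argument requires. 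Neither route introduces new analytic ingredients, so the proof truly reduces to the one-line energy inequality displayed above.
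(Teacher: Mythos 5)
Your argument is correct, and it is a genuinely self-contained proof of a statement that the paper itself does not prove: the author simply cites (3.15) of Kazdan--Warner. The energy-method route you take (test $\Delta u - ku \geq 0$ against $u^{+}$, integrate by parts on the closed manifold, and conclude $\int_M |\nabla u^{+}|^2 + \int_M k (u^{+})^2 \leq 0$, hence $u^{+}\equiv 0$) is well suited to the stated regularity: since $u$ is only in $H^{2,p}$, hence $C^{1,\alpha}$ but not $C^2$, the pointwise evaluate-the-Laplacian-at-an-extremum argument that this paper uses repeatedly for \emph{smooth} functions (e.g.\ in the uniqueness part of Theorem 3.4 and in the monotonicity inductions) is not directly available, and your weak formulation is exactly the standard way around that. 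Your sign conventions are consistent with the paper's (the author's ``analyst's Laplacian'' satisfies $\Delta f \geq 0$ at a minimum, i.e.\ $\Delta = \operatorname{div}\operatorname{grad}$), the identity $u\,u^{+}=(u^{+})^{2}$ and the Stampacchia fact $\nabla u^{+}=\chi_{\{u>0\}}\nabla u$ are used correctly, and the Dirichlet-form identity for $u\in H^{2,p}\subset H^{1,2}$ against $v=u^{+}\in H^{1,2}\cap L^{\infty}$ is legitimate by density. The only cosmetic slip is the remark that the identity ``holds classically'' for smooth approximants $u_n$ with $u_n^{+}$ as test function --- $u_n^{+}$ is still only Lipschitz --- but your alternative justification via the $H^{1,2}$ pairing makes this harmless. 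In short: correct, and arguably more transparent than chasing the citation.
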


See, for example, (3.15) in \cite{kw} for the proof.

We now present the proof of Theorem \ref{Identification of open subset}.

\begin{proof} (\emph{of Theorem \ref{Identification of open subset}})
To begin, we
briefly summarize Bradlow's construction of the PDE's for $u_s$ to satisfy. If $H_s=e^{2u_s}H$ one has:
\[
\sqrt{-1}\Lambda F_{H_s}=\sqrt{-1}\Lambda F_{H}+\sqrt{-1}\Lambda\bar{\partial}(H^{-1}\partial{H}(e^{2u_s})).
\]
\noindent We get

\begin{equation}
\sqrt{-1}\Lambda F_{H_s}=\sqrt{-1}\Lambda F_{H}+2\sqrt{-1}\Lambda\bar{\partial}\partial u_s=\sqrt{-1}\Lambda F_{H}-\Delta_{\omega}u_s.\label{eq:3}
\end{equation}

\noindent Here, $\Delta_\omega$ is the Laplacian operator defined by K\"ahler
class $\omega$. Note that from standard K\"ahler identities, we have

\[2\sqrt{-1}\Lambda\bar{\partial}\partial u_s = - \Delta_\omega u_s,\]

\noindent where we use "analyst's Laplacian" here. It is defined so that on Euclidean $n$-space
$\omega=\sqrt{-1} \delta_{ij} dz^i \wedge d\bar{z}^j$,

\[\Delta_\omega f = \sum_{j=1}^n \frac{\partial^2 f}{\partial z^j \partial \bar{z}^j}.\]

\noindent We will omit the subscript $\omega$ from $\Delta_\omega$ if no confusion arises. Since $|\phi_i|_{H_s}^{2}=e^{2u_s}|\phi_i|_{H}^{2}\;\;\forall i,$
it follows that we can rewrite the last equation in \eqref{s-vortex}, with metric $H$ replaced by $H_s$,
as:
\begin{equation}
-\Delta u_s+\frac{s^{2}}{2}\,\sum_{i=1}^k|\phi_i|_{H}^{2}e^{2u_s}+\left(\sqrt{-1}\Lambda F_{H}-\frac{ s^{2}}{2}\right)=0.
\label{s-vortex 1}
\end{equation}

\noindent If we normalize the K\"ahler metric so that $Vol_{\omega}(M)=1$, we can define
\[
\begin{aligned}c(s): & =2\int_{\Sigma}\left(\sqrt{-1}\Lambda F_{H}-\frac{s^{2}}{2}\right)dvol_\omega=2\int_{\Sigma}\sqrt{-1}\Lambda F_{H}\omega^{n}-\frac{ s^{2}}{2}\, dvol_\omega\\
 & =2\int_{\Sigma}\sqrt{-1}\Lambda F_{H}\omega^{n}-\frac{ s^{2}}{2}=2c_1-\frac{ s^2}{2},
\end{aligned}
\]

\noindent where $c_{1}=\int_{\Sigma}\sqrt{-1}\Lambda F_{H}\omega^{n}$ is independent
of $s$ and $H$. Consider $\psi$, a solution to:

\begin{equation}
\Delta\psi=\left(\sqrt{-1}\Lambda F_{H}-\frac{s^{2}}{2}\right)-\frac{c(s)}{2}=\sqrt{-1}\Lambda F_{H}-c_{1},
\label{equaiton for psi}
\end{equation}

\noindent which is clearly independent of $s$.

\noindent Setting $\varphi_{s}:=2(u_{s}-\psi)$, $u_s$ satisfies \eqref{s-vortex 1} if and only if $\varphi_s$ satisfies:

\begin{equation}
\Delta\varphi_{s}-\frac{s^{2}}{2}(\sum_{i=1}^k\left|\phi_i\right|_{H}^{2}e^{2\psi})e^{\varphi_{s}}-c(s)=0.\label{Kazdan Warner Equation 0}
\end{equation}

\noindent This is of the form:

\begin{equation}
\Delta\varphi_{s}=-\left(\frac{s^{2}}{2}h\right)\, e^{\varphi_{s}}+c(s),\label{Kazdan Warner Equation}
\end{equation}

\noindent with $h =-\sum_{i=1}^k|\phi_i|_H^2 e^{2\psi}< 0$ and $c(s)<0$ (for large $s$). Proving Theorem \ref{Identification of open subset} then boils down to proving the unique existence of solutions to \eqref{Kazdan Warner Equation}. For $s=1$, Lemma 9.3 in \cite{kw} guarantees the unique solution to exist. We state the analogous theorem for general $s$ below. Our proof differs only slightly from that of \cite{kw} in which we choose certain required data more specifically to establish the uniformity and convergent behaviors of solutions $\varphi_s$ over $s$.  The theorem itself is of independent
interest, and applies to general functions on compact Riemannian manifold $(M,g)$. We however keep the notations
identical (except that we replace $\frac{h}{2}$ by $h$) for the convenience of application to our particular PDE \eqref{Kazdan Warner Equation}.

With these, we state our main constructions.

\begin{theorem}
[Existence and Uniqueness of $\varphi_s$]

On a compact Riemannian manifold $(M,g)$ without boundary, let $c_1$ be any constant, $c_2$ any positive constant,
and $h$ any negative smooth function. Let $c(s)=c_1-s^2 c_2$, the partial differential equation

\[\Delta \varphi_s =-s^2h e^{\varphi_s}+c(s)\]

\noindent has a unique smooth solution for all $s$ large enough.
\label{Kazdan Warner Analysis}
\end{theorem}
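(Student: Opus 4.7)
The strategy is the sub/super-solution method of Lemma 9.3 in \cite{kw}, adapted to the $s$-dependent nonlinearity $F(x,u) := -s^2 h(x)\, e^u + c(s)$. The plan has three ingredients: first, construct constant sub- and supersolutions valid for all sufficiently large $s$; second, produce a smooth solution between them by monotone iteration and elliptic bootstrap; third, deduce uniqueness from the maximum principle of Lemma \ref{Maximum Principle}.

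For the first ingredient, a constant $A$ is a supersolution precisely when $0 \leq -s^2 h(x)\, e^A + c(s)$ for every $x \in M$, equivalently $e^A \geq (s^2 c_2 - c_1)/(s^2(-h(x)))$. As $s \to \infty$ the right-hand side converges uniformly on $M$ to $c_2/(-h(x))$, which is bounded above because $h$ is smooth and strictly negative on the compact manifold $M$. Hence any fixed constant $A > \log \sup_{x \in M} (c_2/(-h(x)))$ is a valid supersolution for all $s$ large enough, and symmetrically any $B < \log \inf_{x \in M} (c_2/(-h(x)))$ furnishes a constant subsolution with $B < A$ automatic. For the second ingredient, fix such $A$ and $B$, pick a constant $K$ large enough that $u \mapsto F(x,u) - Ku$ is nonincreasing on $[B,A]$ (possible since $F_u = -s^2 h\, e^u$ is bounded on the relevant range), and iterate $(\Delta - K)\, v_{n+1} = F(x, v_n) - K v_n$ starting from $v_0 = B$. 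The operator $\Delta - K$ is invertible with a maximum principle, and combined with monotonicity this forces $B \leq v_n \leq v_{n+1} \leq A$ for all $n$. The sequence converges in $W^{2,p}$ for any $p > \dim M$ to a fixed point $\varphi_s$ solving the PDE, and Schauder bootstrap promotes $\varphi_s$ to $C^\infty$.

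For uniqueness, if $\varphi_s^{(1)}, \varphi_s^{(2)}$ are two smooth solutions and $w := \varphi_s^{(1)} - \varphi_s^{(2)}$, the mean value theorem yields
\[\Delta w \;=\; -s^2 h \bigl( e^{\varphi_s^{(1)}} - e^{\varphi_s^{(2)}} \bigr) \;=\; -s^2 h\, e^{\xi}\, w\]
for some smooth intermediate $\xi$; that is, $(\Delta - k) w = 0$ with $k := -s^2 h\, e^{\xi} > 0$. Applying Lemma \ref{Maximum Principle} to $\pm w$ forces $w \equiv 0$. The main obstacle I anticipate is not existence or uniqueness per se, but arranging that the constants $A, B$ of the first ingredient can be chosen \emph{independently of $s$} for all sufficiently large $s$; this is precisely what the uniform convergence $(s^2 c_2 - c_1)/(s^2(-h)) \to c_2/(-h)$ on $M$ supplies, and the resulting $s$-uniform bound $B \leq \varphi_s \leq A$ is the seed from which the Sobolev estimates of Theorem \ref{Main Theorem Intro} will eventually grow.
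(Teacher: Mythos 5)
Your proposal is correct and follows the same overall strategy as the paper: sub/super-solutions, monotone iteration with an invertible operator $\Delta - K$, elliptic bootstrap, and a maximum-principle argument for uniqueness. The one genuine divergence is the choice of barriers. You take \emph{constant} barriers $B \leq \varphi_s \leq A$, with $A,B$ independent of $s$, justified by the uniform convergence of $(s^2c_2-c_1)/(s^2(-h))$ to $c_2/(-h)$; this is the simpler choice and fully suffices for the existence and uniqueness claim of this theorem. The paper instead takes the non-constant, $s$-dependent barriers $\varphi_{\pm,s} = \log\bigl(\frac{\pm K - c(s)}{-s^2h}\bigr)$, where $K$ dominates $|\Delta(-\log(-h))|$; these satisfy $\varphi_{+,s}-\varphi_{-,s}\to 0$ uniformly and both converge to $\log(c_2/(-h))$. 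That refinement buys nothing for the present theorem, but it is the engine of the $L^\infty$ convergence $\varphi_s \to \log(c_2/(-h))$ in Theorem \ref{Main Theorem}; your constant barriers only give a uniform bound, not the limit, so be aware that the sharper barriers (or some substitute) will still be needed later. Two small points: your constant $K$ in the iteration must be allowed to depend on $s$ (it must dominate $s^2(-h)e^A$, which grows like $s^2$; the paper uses $L_s=\Delta-s^2k$), which is harmless since existence is proved for each fixed $s$; and in the uniqueness step the pointwise mean-value $\xi$ need not be smooth --- write the coefficient as $k = -s^2h\int_0^1 e^{t\varphi_s^{(1)}+(1-t)\varphi_s^{(2)}}\,dt$ to get a smooth positive $k$ before invoking Lemma \ref{Maximum Principle}. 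The paper avoids this by arguing directly at an interior minimum of the difference, as in your base case elsewhere; both routes are standard and valid.
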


\noindent \begin{proof}

We first establish the uniqueness, a consequence of the maximum principle. Fix $s \in \mathbb{R}$, suppose
that $\varphi_s^1$ and $\varphi_s^2$ are smooth solutions to the equation and so is their difference $\varphi_s^1 - \varphi_s^2$. If $\varphi_s^1 \neq \varphi_s^2$, without loss of
generality, we may assume

\[\inf_M \{\varphi_s^1 (x) - \varphi_s^2 (x)\} < 0.\]

\noindent Since $M$ is compact and $\varphi_s^1-\varphi_s^2$ is smooth, the infimum must be attained at some
point $x_0 \in M$. We have

\[\varphi_s^1 (x_0) < \varphi_s^2 (x_0).\]

\noindent It follows that

\[\Delta (\varphi_s^1 - \varphi_s^2) (x_0) = -s^2h [e^{\varphi_s^1(x_0)} - e^{\varphi_s^2 (x_0)}] < 0,\]

\noindent since $-h>0$ and exponential functions are monotonically increasing. We have arrived at a contradiction
since the Laplacian of a smooth function has to be nonnegative at the point of minimum value. Therefore, the
solution for each $s$ has to be unique.

Following principles of the proof of Lemma 9.3 in \cite{kw}, we show the existence of solutions by constructing a sub-solution $\sub{s}$ satisfying

\[\Delta\sub{s} -c(s) +s^2h e^{\sub{s}} \geq 0,\]

\noindent and a super-solution $\super{s}$ satisfying

\[\Delta\super{s} -c(s) + s^2 h e^{\super{s}} \leq 0.\]

\noindent The two functions can be constructed independently of $s$ if the techniques from \cite{kw} are mimicked entirely. We however, choose pairs of super and sub
solutions that converge to the same function as $s \to \infty$. This construction will be useful in the Main Theorem
\ref{Main Theorem}, when we study the uniformity and convergent behaviors of $\varphi_s$.

Since the function $h$ is smooth and does not vanish, the function $ \log(-h) $ is smooth and therefore uniformly
bounded on the compact manifold $M$. Consequentially, there exists then a constant $K>0$ so that

\[\Delta (-\log(-h)) + K \geq 0,\]

and

\[\Delta (-\log(-h)) - K \leq 0.\]

For $s$ large enough so that $-K -c(s)\geq 0$, we define

\begin{equation}
\sub{s} = \log \left(\frac{-K-c(s)}{-s^2h}\right).
\label{sepcial sub solution}
\end{equation}

\noindent and
\begin{equation}
\super{s} = \log \left(\frac{K-c(s)}{-s^2h}\right).
\label{sepcial super solution}
\end{equation}

\noindent We have, for all $s$,

\[c(s) - s^2h e^{\sub{s}} = -K,\]

\noindent and

\[c(s) - s^2h e^{\super{s}} = K.\]

\noindent One can easily see that

\[\Delta \sub{s} = \Delta \super{s} = \Delta (-\log(-h)),\]

\noindent since $-\log(-h)$ is the only non constant part of their definitions on $M$. By our choice of $K$, we have

\[\Delta \sub{s} - c(s) + s^2 h e^{\sub{s}} \geq 0.\]

\noindent and

\[\Delta \super{s} - c(s) + s^2h e^{\super{s}} \leq 0,\]

\noindent verifying that they are indeed sub and super solutions.

The functions $\varphi_{+,s}$ and $\varphi_{-,s}$ are clearly uniformly bounded. In fact, one can readily verify that

\[\super{s} - \sub{s} = \log \left(\frac{K-c_1+c_2s^2}{-K-c_1+c_2s^2}\right) \to 0,\]

\noindent uniformly as $s \to \infty$.

We are now ready to solve the equation for each $s$. The solution will be the limit of certain iterative equations. Pick a constant $k>0$ so that

\[k \geq \sup_{s,M} -h e^{\varphi_{+,s}} ,\]

\noindent and consider the family of operators defined by

\[L_s = \Delta - s^2kI,\]

\noindent where $I: M \to M$ is the identity operator. Setting $\varphi_{0,s} := \varphi_{+,s}$. Since $s^2k>0$, $L_s$ is invertible for each $s$, and we can
therefore define the sequence $\{\varphi_{i,s}\}$ inductively by

\begin{equation}
\Delta \varphi_{i+1,s} -s^2k\varphi_{i+1,s} = c(s) - s^2k \varphi_{i,s} - s^2 h e^{\varphi_{i,s}}.
\label{iterative i s}
\end{equation}

\noindent That is, $\varphi_{0,s}=\varphi_{i+1,s}$ is the unique solution to the equation

\begin{equation}
L_s(f) =  c(s) - s^2k \varphi_{i,s} - s^2 h e^{\varphi_{i,s}}.
\label{Definition of L s}
\end{equation}

$\varphi_{+,s}$ is smooth by construction, and so is

\[c(s)-s^2k\varphi_{+,s}-s^2he^{\varphi_{+,s}}.\]

\noindent Schauder's estimate (cf. section 3 of \cite{kw}) on elliptic operators $L_s$ then ensures that $\varphi_{1,s}$ is smooth. By induction and the iterative relation \eqref{iterative i s} above, it follows that all $\varphi_{i,s}$ are smooth. A more crucial observation is that for all $i$
and $s$, we have the following monotonic and bounded-ness relation in $i$:

\begin{equation}
\varphi_{-,s} \leq \varphi_{i+1,s} \leq \varphi_{i,s} \leq \varphi_{+,s}
\label{monotinicity of iterative equations}
\end{equation}

\noindent This will be proved by induction using the maximum principle of $L_s$. For $i=1$, we recall that

\[L_s(\varphi_{+,s}) = \Delta \varphi_{+,s} - s^2k \varphi_{+,s} \leq c(s) -s^2 k \varphi_{+,s} -s^2 h e^{\varphi_{+,s}}=L_s(\varphi_{1,s}),\]

\noindent and therefore

\[L_s(\varphi_{1,s} - \varphi_{+,s}) \geq 0,\]

\noindent which implies $\varphi_{1,s} \leq \varphi_{+,s}$ by the Maximum Principle \ref{Maximum Principle}.
Suppose now that $\varphi_{i,s} \leq \varphi_{i-1,s}$. Since $k>-he^{\varphi_{+,s}}$ by its definition, one can
readily compute that

\begin{eqnarray}
L_s(\varphi_{i+1,s} - \varphi_{i,s}) &\geq& -s^2h e^{\varphi_{+,s}} \left[e^{\varphi_{i,s}-\varphi_{+,s}}-e^{\varphi_{i-1,s}-\varphi_{+,s}}-(\varphi_{i,s}-\varphi_{+,s})+(\varphi_{i-1,s}-\varphi_{+,s})\right] \nonumber \\
&=& -s^2he^{\varphi_{+,s}} \left[F(\varphi_{i,s}-\varphi_{+,s}) - F(\varphi_{i-1,s}-\varphi_{+,s})\right],
\label{iterative inequality for monotonicity}
\end{eqnarray}

\noindent where

\[F(x) = e^x -x.\]

\noindent $F(x)$ is a decreasing function when $x \leq 0$ since

\[F'(x) = e^x - 1 \leq 0 \;\; \forall x\leq 0.\]

\noindent Since $\varphi_{i,s}-\varphi_{+,s} \leq \varphi_{i-1,s}-\varphi_{+,s} \leq 0$ by inductive hypothesis,
we conclude that

\[\left[F(\varphi_{i,s}-\varphi_{+,s}) - F(\varphi_{i-1,s}-\varphi_{+,s})\right] \geq 0,\]

\noindent making the right hand side of \eqref{iterative inequality for monotonicity} nonnegative (recall that $-h>0$). This concludes the inductive step $\varphi_{i+1,s} \leq \varphi_{i,s} $ by the maximum principle of $L_s$. We finally show that

\[\varphi_{-,s} \leq \varphi_{i,s} \;\; \forall\; i,s.\]

\noindent This will again be shown by induction. To show that $\varphi_{-,s} \leq \varphi_{+,s}$ we suppose the
contrary, that

\[\inf_M \{\varphi_{+,s}(x)-\varphi_{-,s}(x)\} < 0.\]

\noindent Since $\varphi_{+,s} - \varphi_{-,s}$ is smooth and $M$ is compact, the infimum must be attained at some
point $x_0 \in M$. Therefore,

\[\Delta (\varphi_{+,s} - \varphi_{-,s}) (x_0) \leq -s^2h (e^{\varphi_{+,s}(x_0)} - e^{\varphi_{-,s}(x_0)}) < 0.\]

\noindent This is a contradiction since the Laplacian of a smooth function must be nonnegative at the minimum. We
conclude that $\varphi_{-,s} \leq \varphi_{+,s}$. Now suppose that $\varphi_{-,s} \leq \varphi_{i,s}$. Identical
computations as in \eqref{iterative inequality for monotonicity} yield

\[L_s(\varphi_{-,s} - \varphi_{i+1,s}) \geq -s^2he^{\varphi_{+,s}} \left[F(\varphi_{-,s} -\varphi_{+,s}) - F(\varphi_{i,s}-\varphi_{+,s})\right],\]

\noindent where $F(x)=e^x-x$ as above. Since $\varphi_{-,s} -\varphi_{+,s} \leq \varphi_{i,s}-\varphi_{+,s} \leq 0$
by inductive hypothesis, we again have $F(\varphi_{-,s} -\varphi_{+,s}) - F(\varphi_{i,s}-\varphi_{+,s}) \geq 0$
and therefore have established the inductive statement. The monotonicity relation
\eqref{monotinicity of iterative equations} is established.

Next, we wish to show that for each $s$, $\varphi_{i,s}$ uniformly converge to a smooth function $\varphi_s$.
This is a replica of argument from \cite{kw}. Recall inequality (3.12) from \cite{kw}, which is a consequence of Sobolev inequality (cf. (3.8) in \cite{kw}) and the fundamental elliptic regularity (cf. (3.9) in \cite{kw}). For $p>dim(M)$ and $u \in H^{2,p}$, we have

\begin{equation}
\Lnorm{u}{\infty} + \Lnorm{\nabla u}{\infty} \leq C_s \Lnorm{L_s u}{p}.
\label{elliptic regularity}
\end{equation}

\noindent Also recall that

\begin{equation}
\Lnorm{L_s(\varphi_{i,s})}{p} = \Lnorm{c(s) - s^2k\varphi_{i-1,s} -  s^2 h e^{\varphi_{i-1,s}}}{p}.
\label{Some equation}
\end{equation}

\noindent For a fixed $s$, \eqref{monotinicity of iterative equations} ensures that the right hand side of \eqref{Some equation} is uniformly bounded. Inequality \eqref{elliptic regularity} then implies that $\varphi_{i,s}$ and their first derivatives
are uniformly bounded in $L^\infty$. By the Theorem of Arzela-Ascoli, $\varphi_{i,s}$ possesses a subsequence
uniformly converging to a function $\varphi_s$ as $i \to \infty$. The monotonicity of $\varphi_{i,s}$ in $i$
implies that the subsequence is actually the entire sequence.

Moreover, the $L^p$ regularity shows that

\[\Wnorm{\varphi_{i+1,s}-\varphi_{j+1,s}}{2}{p} \leq C \left(\Lnorm{s^2h}{p} \Lnorm{e^{\varphi_{i,s}}-e^{\varphi_{j,s}}}{\infty}+\Lnorm{k}{p}\Lnorm{\varphi_{i,s}-\varphi_{j,s}}{\infty}\right).\]

\noindent For a fixed $s$, the sequence $\{\varphi_{i,s}\}_i$ converges in $L^\infty$, making the right hand side
of the inequality above Cauchy. The sequence  $\{\varphi_{i,s}\}_i$ is therefore strongly Cauchy in $H^{2,p}$, and
therefore strongly convergent. We have arrived at the conclusion that $\varphi_{i,s}$ converges to $\varphi_s$ in $H^{2,p}$, a
solution to the equation

\[\Delta \varphi_s = c(s) -s^2h e^{\varphi_s}.\]

\noindent Since $\varphi_s \in H^{2,p}$, $\Delta \varphi_s \in H^{2,p}$ as well since $h$ is smooth. Using Schauder's estimate (cf. section 3 of \cite{kw}), we conclude that $\varphi_s \in H^{4,p}$. Further bootstraping of the equation above implies that $\varphi_s \in H^{l,p}$ for all $l$. Since $M$ is compact, it follows that $\varphi_s \in H^{l,2}$ for all $l$, and is therefore smooth.

This completes the proof of Theorem \ref{Kazdan Warner Analysis} on the existence and uniqueness of the solutions to the equation.

\end{proof}

Consequentially, Theorem \ref{Identification of open subset} is proved.
\end{proof}

We now state the Main Theorem, on the bounded-ness and convergence of $\varphi_s$. Once again, this theorem is a
general analytic result. The functions and constants here need not be related to our initial geometric and
topological data. We nevertheless use the same notations for the convenience of application and comparison.

Before stating and proving the Main Theorem, we state the following elementary fact that follows easily from H\"older's inequality.

\begin{lemma} [Convergence of Powers]
Given two families of functions $\{f_s\}$ and $\{g_s\}$ on a bounded domain $U \subset \mathbb{R}^n$, such that $\{g_s\}$ are
uniformly bounded in $L^p$ for all $p$, and

\[\lim_{s \to \infty} \Lnorm{f_s-g_s}{p} = 0 \;\; \forall \; p,\]

\noindent we have

\[\lim_{s \to \infty} \Lnorm{f_s^N-g_s^N}{p} = 0 \;\; \forall \;p.\]

\noindent Here $N \in \mathbb{N}$ is arbitrary.
\label{convergence of powers}
\end{lemma}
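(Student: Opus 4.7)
The plan is to reduce the problem to H\"older's inequality via the algebraic identity
\[
f_s^N - g_s^N = (f_s - g_s)\sum_{k=0}^{N-1} f_s^{N-1-k}\,g_s^k.
\]
First I would upgrade the hypothesis to a uniform $L^p$ bound on $\{f_s\}$ as well: since $\|f_s\|_p \leq \|g_s\|_p + \|f_s - g_s\|_p$, and both terms on the right are bounded in $s$ (the first by hypothesis, the second because it tends to zero and hence is bounded for all $s$ large, with the finitely many remaining indices absorbed into the bound), I obtain uniform $L^p$ control of $\{f_s\}$ for every finite $p$.

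With that in hand, fix $p \in [1,\infty)$ and $N \in \mathbb{N}$. Applying H\"older's inequality with conjugate exponents $Np$ and $Np/(N-1)$ to the above factorization gives
\[
\bigl\| f_s^N - g_s^N \bigr\|_p \;\leq\; \|f_s - g_s\|_{Np}\cdot\Big\| \sum_{k=0}^{N-1} f_s^{N-1-k}\,g_s^k \Big\|_{Np/(N-1)}.
\]
Each summand in the second factor is further estimated by H\"older with exponents $Np/(N-1-k)$ and $Np/k$ (the endpoints $k=0$ and $k=N-1$ being handled trivially, with only one family of functions appearing), producing a bound of the form $\|f_s\|_{Np}^{N-1-k}\|g_s\|_{Np}^{k}$. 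By the previous paragraph each such product is uniformly bounded in $s$, so after summing the $N$ terms the bracketed factor is controlled by a constant $C=C(N,p)$ independent of $s$. Since $\|f_s - g_s\|_{Np} \to 0$ by hypothesis, the desired conclusion $\|f_s^N - g_s^N\|_p \to 0$ follows. The case $p=\infty$, if included in the statement, is even simpler:
\[
\|f_s^N - g_s^N\|_\infty \;\leq\; N\cdot \max\bigl(\|f_s\|_\infty,\|g_s\|_\infty\bigr)^{N-1}\|f_s-g_s\|_\infty,
\]
and all three factors behave as required.

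There is no genuine obstacle here; the argument is a textbook application of H\"older. The only piece of bookkeeping worth noting is the transfer of the uniform $L^p$ control from $\{g_s\}$ to $\{f_s\}$, which is dispatched in one line by the triangle inequality as above, and the need to choose the exponent $Np$ one step larger than the target $L^p$ so that the $N$ powers of the factored sum can be absorbed into uniform $L^{Np}$ bounds.
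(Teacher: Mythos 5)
Your proof is correct and is exactly the argument the paper has in mind: the paper offers no proof of this lemma at all, merely remarking that it ``follows easily from H\"older's inequality,'' and your factorization of $f_s^N-g_s^N$ together with the two-step H\"older estimate is the standard way to fill in that gap. The only point worth flagging is that ``the finitely many remaining indices'' is loose for a continuous parameter $s$, but since the conclusion only concerns the limit $s\to\infty$, a bound for all sufficiently large $s$ is all that is needed.
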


\begin{theorem} [Main Theorem]
On a compact Riemannian manifold $M$ without boundary, let $c_1$ be any constant, $c_2$ any positive constant,
and $h$ any negative smooth function. Let $c(s) = c_1 -c_2 s^2$, for each $s$ large enough,
the unique solutions $\varphi_s \in C^\infty$ for the equations

\[
\Delta\varphi_{s}=c(s)-s^2he^{\varphi_s}.
\]

\noindent are uniformly bounded in $\W{l}{p}$ for all $l \in \mathbb{N}$ and $p \in [1,\infty]$. Moreover, in the limit
$s \to \infty$, $\varphi_{s}$ converges smoothly (i.e. in all $H^{l,p}$) to

\[\varphi_\infty = \log\left(\frac{c_2}{-h}\right),\]

\noindent the unique solution to

\[
he^{\varphi_{\infty}}+c_{2}=0.
\]

\label{Main Theorem}

\end{theorem}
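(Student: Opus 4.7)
The plan splits into three phases, exploiting the structure of the equation around the limit $\varphi_\infty$. \emph{Phase 1} establishes baseline $L^\infty$ decay by reusing the sub- and super-solutions constructed in the existence proof. The explicit formulas \eqref{sepcial sub solution} and \eqref{sepcial super solution} can be rewritten as
\[
\varphi_{\pm,s} \;=\; \varphi_\infty \;+\; \log\!\left(1 + \frac{\pm K - c_1}{c_2 s^2}\right),
\]
so they differ from $\varphi_\infty$ only by additive \emph{constants} of order $O(1/s^2)$. Since the iterative construction yielded $\varphi_{-,s} \leq \varphi_s \leq \varphi_{+,s}$ (see \eqref{monotinicity of iterative equations}), the sandwich gives $\|\varphi_s - \varphi_\infty\|_{L^\infty} = O(1/s^2)$.

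In \emph{Phase 2}, set $\eta_s := \varphi_s - \varphi_\infty$. Using the defining identity $c_2 = -h e^{\varphi_\infty}$, I rewrite the PDE as $(\Delta - c_2 s^2)\eta_s = f + c_2 s^2 R(\eta_s)$, where $f := c_1 - \Delta \varphi_\infty$ is a fixed smooth function and $R(\eta) := e^{\eta} - 1 - \eta = O(\eta^2)$. Phase 1 gives $|R(\eta_s)| \leq C\eta_s^2 = O(1/s^4)$, so the right-hand side is uniformly bounded in $L^\infty$. To obtain $\|\nabla^k \eta_s\|_{L^\infty} = O(1/s^2)$ for every $k$, I induct on $k$. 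Apply the Bochner identity to $u_k := |\nabla^k \eta_s|^2$; the dominant term on the right-hand side of the resulting equation is $c_2 s^2 e^{\eta_s}\, u_k$, arising from differentiating the source $c_2 s^2(e^{\eta_s}-1)$ in $\Delta \eta_s$. All remaining contributions are bounded: the Ricci-type commutator $[\nabla^k, \Delta]\eta_s$ involves only derivatives of order $\leq k$ with bounded coefficients, while a Fa\`a di Bruno expansion of $\nabla^k R(\eta_s)$ decomposes it into the top-order piece $(e^{\eta_s}-1)\nabla^k \eta_s$ whose coefficient has size $O(1/s^2)$ and lower-order products involving at least two derivatives of $\eta_s$ of order $\leq k-1$, each $O(1/s^2)$ by induction. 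At $x_{\max}$ we have $\Delta u_k(x_{\max}) \leq 0$, and the Bochner equation reads schematically
\[
0 \;\geq\; \bigl(c_2 s^2 e^{\eta_s} - C\bigr) u_k(x_{\max}) \;-\; C\sqrt{u_k(x_{\max})} \;-\; C/s^2.
\]
For $s$ large, $c_2 s^2 e^{\eta_s} \geq \tfrac{1}{2}c_2 s^2$, and solving this inequality gives $\|\nabla^k \eta_s\|_{L^\infty} \leq C_k/s^2$, completing the induction. The uniform bounds $\|\eta_s\|_{C^k} \leq C_k/s^2$ for every $k$ then directly yield both the uniform $H^{l,p}$ bounds on $\varphi_s = \varphi_\infty + \eta_s$ claimed in the theorem, and the smooth convergence $\varphi_s \to \varphi_\infty$, since $C^k$ embeds continuously into every $H^{l,p}$ on the compact manifold $M$ once $k \geq l$.

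The principal obstacle is the explicit $s^2$ factor in the PDE, which causes a naive elliptic bootstrap on $\Delta \varphi_s = c(s) - s^2 h e^{\varphi_s}$ to diverge: each derivative of $s^2 h e^{\varphi_s}$ supplies a fresh factor $s^2 e^{\varphi_s} \nabla \varphi_s$, which is \emph{a priori} of order $s^2$. The remedy rests on two structural ingredients working in tandem: the algebraic cancellation $c_2 + h e^{\varphi_\infty} = 0$, which recasts the troublesome term into $c_2 s^2 (e^{\eta_s}-1)$ of size $O(1)$ once $\eta_s = O(1/s^2)$; and the strong damping from the large positive zeroth-order coefficient $c_2 s^2$ of $\Delta - c_2 s^2$, which via the maximum principle (Lemma \ref{Maximum Principle}) absorbs all lower-order feedback at each inductive step.
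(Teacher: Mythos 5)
Your proposal is correct, but it takes a genuinely different route from the paper. The paper does not linearize at $\varphi_\infty$; instead it introduces the $s$-dependent approximate solutions $v_s=\log\bigl(\tfrac{\Delta(-\log(-h))-c(s)}{-s^2h}\bigr)$ of \eqref{defintion of approximated solutions}, which converge smoothly to $\varphi_\infty$ and satisfy the PDE up to an error $E_s\to0$ smoothly, and then proves $\|\p^I(\varphi_s-v_s)\|_{L^\infty}\to0$ by induction on $|I|$, applying the scalar maximum principle to each individual derivative $\p^I(\varphi_s-v_s)$; this requires the ``crucial claim'' \eqref{convergence of varphi s and v s} that $s^2(e^{\varphi_s}-e^{v_s})\to0$ and an elaborate bookkeeping of the terms $A_{j,s},B_{j,s},C_s$ and curvature commutators $Q^j(Rm)$. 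You instead exploit the algebraic cancellation $c_2+he^{\varphi_\infty}=0$ to write the equation for $\eta_s=\varphi_s-\varphi_\infty$ with damping coefficient $c_2s^2$, get the base rate $\|\eta_s\|_{L^\infty}=O(s^{-2})$ from the observation that the barriers \eqref{sepcial sub solution}--\eqref{sepcial super solution} are $\varphi_\infty$ plus constants of size $O(s^{-2})$ (together with \eqref{monotinicity of iterative equations}), and then run a Bochner/maximum-principle induction on $|\nabla^k\eta_s|^2$. What your route buys is a quantitative rate $O(s^{-2})$ in every $C^k$ norm, where the paper obtains only $o(1)$, and it dispenses with the auxiliary family $v_s$ and the crucial claim entirely; what it costs is the use of the tensor Bochner formula and Ricci identities in place of purely scalar maximum principles. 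Two small points to tighten: the commutator $[\nabla^k,\Delta]\eta_s$ contains a top-order piece of the schematic form $Rm*\nabla^k\eta_s$, which contributes a term $-Cu_k$ that must be (and is) absorbed by the $c_2s^2e^{\eta_s}u_k$ damping, so your ``$s$ large'' threshold depends on $k$ — harmless for the statement, but worth saying; and in your schematic inequality the forcing from $\nabla^k f$ with $f=c_1-\Delta\varphi_\infty$ is $O(1)$, so the term should read $-C\sqrt{u_k(x_{\max})}$ with an $s$-independent constant rather than carrying a $-C/s^2$ remainder — the conclusion $\sqrt{u_k}\le C_k/s^2$ is unchanged since the damping is quadratic in $\sqrt{u_k}$ while the forcing is linear.
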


\begin{proof}

We continue from the proof of the previous theorem. Recall the monotonicity and bounded-ness of $\varphi_{i,s}$:

\[\sub{s} \leq \varphi_{i,s} \leq \super{s}\]

\noindent for all $i,s$. Passing $i \to \infty$, we have

\begin{equation}
\sub{s} \leq \varphi_{s} \leq \super{s}
\label{uniform bound for varphi s}
\end{equation}

\noindent for all $s$. The functions $\sub{s}$ and $\super{s}$ are again uniformly bounded over $s$. In fact, one can observe that

\[\super{s} - \sub{s} = \log \left(\frac{K-c_1+c_2s^2}{-K-c_1+c_2s^2}\right) \to 0\]

\noindent in $L^\infty$ as $s \to \infty$. With the bounded-ness condition \eqref{uniform bound for varphi s}, we immediately conclude that

\[\varphi_\infty=\lim_{s\to \infty} \varphi_s = \lim_{s\to \infty}\super{s} = \lim_{s \to \infty}\sub{s} = \log\left(\frac{-c_2}{h}\right),\]

\noindent in $L^\infty$.

To show the convergence in general $\W{l}{p}$, we first consider a family of approximated solutions that converge smoothly to $\varphi_\infty$ as $s\to\infty$. Consider

\begin{equation}
v_s := \log \left(\frac{\Delta\left(-\log(-h)\right)-c(s)}{-s^2h}\right)
\label{defintion of approximated solutions}
\end{equation}

\noindent Since the function inside the logarithm converges smoothly to $\frac{-c_2}{h}$, it is clear that

\[v_s \to \varphi_\infty\]

\noindent smoothly as $s \to \infty$ . In fact, since all $v_s$ are uniformly bounded, Lemma \ref{convergence of powers} implies that

\begin{equation}
v_s^N \to \varphi_\infty^N,
\label{convergence of powers of functions}
\end{equation}

\noindent smoothly for all $N \in \mathbb{N}$ as $s\to \infty$.. These functions $v_s$ are approximated solutions to the PDE \eqref{Kazdan Warner Equation} in the following sense:

\[\Delta v_s = c(s) -s^2he^{v_s}+E_s,\]

\noindent where

\begin{equation}
E_s = \Delta \log \left(\frac{\Delta\left(-\log(-h)\right)-c(s)}{s^2}\right).
\label{definition of the error term}
\end{equation}

\noindent  Without the $h$ in the denominator, the function

\[\log \left(\frac{\Delta\left(-\log(-h)\right)-c(s)}{s^2}\right)\]

\noindent converge smoothly to a constant as $s\to \infty$ and therefore it is clear that $E_s \to 0$ smoothly as $s\to \infty$.

The convergence statement of the theorem then follows the lemma below:

\begin{lemma}
For all $l \in \mathbb{N}$, we have, with $v_s$ and $\varphi_s$ defined in this theorem, that

\[\lim_{s\to \infty} \Wnorm{\varphi_s - v_s}{l}{\infty} = 0.\]

\label{smooth convergence of function with approximations}
\end{lemma}

\begin{proof} ({\em of the Lemma})

We perform induction on $l$. The base case $l=0$ has been established, as both $v_s$ and $\varphi_s$ converge uniformly to $\varphi_\infty$ as $s\to \infty$. Before we establish the inductive step, we first make the following crucial claim.

\noindent \emph{Claim}:

\begin{equation}
\lim_{s\to \infty} \Lnorm{s^2\left(e^{\varphi_s}-e^{v_s}\right)}{\infty}=0
\label{convergence of varphi s and v s}
\end{equation}

To verify the claim, we start with the difference of the equations satisfied by $\varphi_s$ and $v_s$:

\begin{equation}
\Delta \left(\varphi_s-v_s\right) = -s^2he^{\varphi_s}+s^2h e^{v_s} - E_s
\label{diffrence of varphi s and v s}
\end{equation}

\noindent For each $s$, since the function $\varphi_s-v_s$ is smooth on the compact manifold $M$, there is a point $x_s \in M$ such that

\[\varphi_s(x_s)-v_s(x_s) = \sup_{x\in M}\{\varphi_s(x)-v_s(x)\}.\]

\noindent The Laplacian of $\varphi_s-v_s$ must be non-positive at $x_s$, and we have

\[0 \geq \Delta\left(\varphi_s - v_s\right)(x_s) = -s^2h(x_s)e^{\varphi_s(x_s)}+s^2h(x_s)e^{v_s(x_s)}-E_s(x_s).\]

\noindent It follows that, for all $x \in M$,

\begin{eqnarray}
E_s(x_s) &\geq& -s^2h(x_s)e^{v_s(x_s)}\left[e^{\varphi_s(x_s)-v_s(x_s)}-1\right] \nonumber \\
&\geq& -s^2h(x_s)e^{v_s(x_s)}\left[e^{\varphi_s(x)-v_s(x)}-1\right] \nonumber \\
&=& -s^2h(x_s)e^{v_s(x_s)}e^{-v_s(x)}\left[e^{\varphi_s(x)}-e^{v_s(x)}\right] \nonumber \\
\label{upper bound for the exponential difference construction}
\end{eqnarray}

\noindent The second inequality follows from the choice of $x_s$:

\[ \varphi_s(x_s)-v_s(x_s) \geq \varphi_s(x)-v_s(x) \;\; \forall x \in M.\]

\noindent Since the exponential function is monotonically increasing, and that $-s^2h(x_s)e^{v_s(x_s)}\geq0$, the inequality follows. We therefore arrive at the conclusion

\begin{equation}
s^2\left[e^{\varphi_s(x)}-e^{v_s(x)}\right] \leq E_s(x_s) \frac{e^{-v_s(x_s)}e^{v_s(x)}}{-h(x_s)}.
\label{upper bound for exponential difference}
\end{equation}

\noindent Since $v_s$ is uniformly convergent, thus bounded, and $h(x_s)\neq 0$, the fractional term is uniformly bounded. Since $E_s \to 0$ uniformly, the upper bound we have just obtained decays to 0 uniformly.

We need a lower bound that uniformly converge to 0. This is constructed using the same principle, except the special point $y_s \in M$ is chosen to be the point where the difference $\varphi_s - v_s$ achieves its infimum:

\[\varphi_s(y_s) - v_s(y_s) = \inf_{x\in M} \{\varphi_s(x)-v_s(x)\}.\]

\noindent The Laplacian of $\varphi_s - v_s$ now has to be non-negative at $y_s$, and we have identical chain of inequalities as in \eqref{upper bound for the exponential difference construction} in reverse order:

\begin{eqnarray}
E_s(y_s) &\leq& -s^2h(y_s)e^{v_s(y_s)}\left[e^{\varphi_s(y_s)-v_s(y_s)}-1\right] \nonumber \\
&\leq& -s^2h(y_s)e^{v_s(y_s)}\left[e^{\varphi_s(x)-v_s(x)}-1\right] \nonumber \\
&=& -s^2h(y_s)e^{v_s(y_s)}e^{-v_s(x)}\left[e^{\varphi_s(x)}-e^{v_s(x)}\right] \nonumber \\
\label{lower bound for the exponential difference construction}
\end{eqnarray}

This leads to the desired lower bound

\begin{equation}
s^2\left[e^{\varphi_s(x)}-e^{v_s(x)}\right] \geq E_s(y_s) \frac{e^{-v_s(y_s)}e^{v_s(x)}}{-h(y_s)},
\label{lower bound for exponential difference}
\end{equation}

\noindent which decays to 0 uniformly as $s \to \infty$. The decaying upper bound \eqref{upper bound for exponential difference} and lower bound \eqref{lower bound for exponential difference} verify the claim \eqref{convergence of varphi s and v s}.

Inductively, suppose that

\[\lim_{s\to \infty} \Wnorm{ \varphi_s -  v_s}{l}{\infty} = 0.\]

\noindent That is, for any multi-index $J$ such that $|J|\leq l$, we have

\[\lim_{s\to \infty} \Lnorm{\p^J \varphi_s -  \p^J v_s}{\infty} = 0.\]

\noindent We wish to establish the convergence to the order $l+1$. The proof is substantially identical to the one for Claim \eqref{convergence of varphi s and v s}, despite its involvement of rather tedious and lengthy bookkeeping of notations. Let $I$ be a multi-index of length $l+1$. We apply $\p^I$ to \eqref{diffrence of varphi s and v s}, with caution to the commutation relation between $\p^I$ and $\Delta$ stated in
\cite{V}, one computes

\begin{eqnarray}
& &\Delta\left(\p^{I} \varphi_s - \p^{I} v_s\right) \nonumber \\
&=&\sum_{j\in \M} \sum_{m^j(t)} \{\left[a_{m^j(t)} \left(\p^{I-j}h\right)\left(\p^{m_i} v_s\right)^{t_i} \right]s^2e^{v_s} - \left[a_{m^j(t)} \left(\p^{I-j}h\right)\left(\p^{m_i} \varphi_s\right)^{t_i} \right]s^2e^{\varphi_s} \}\nonumber \\
& &-h\left[\left(\p^{I}\varphi_s\right)s^2e^{\varphi_s}-\left(\p^{I}v_s\right)s^2e^{v_s}\right] +\sum_{j\in M^l}Q^j(Rm)\left(\p^j\varphi_s - \p^j v_s\right) \nonumber \\
& &- Q^I(Rm)\left(\p^I \varphi_s-\p^I v_s\right)-\p^{I}E_s. \nonumber \\
\label{l+1 derivative primitive}
\end{eqnarray}

\noindent Several notations above require explanations. These are algebraic expressions resulting from chain rules and product rules of differentiations, and the contributions of curvature tensors resulted form commuting $\p^I$ and $\Delta$. First,

\[M^l=\{r \in \mathbb{N}^n\;|\;|r|\leq l\},\]

\noindent so that $j\in\{I\}\cup M^l$ means exactly that $j=I$ or some multi-index of length no greater than $l$. Each $j \in \{I\}\cup M^l$ generates a collection of pairs of the form

\[m^j(t) := \{(m_1,\ldots,m_q),(t_1,\ldots,t_q)\;|\;m_i \in \mathbb{N}^n,t_i\in \mathbb{N}\}\]

\noindent such that $|m_i|\leq l$ and

\[m_1t_1+\ldots+m_qt_q=|j|.\]

\noindent $a_{m(t)}$'s are then the appropriate combinatorial constants in front of each function when differentiating the functions $e^{v_s}$ and $e^{\varphi_s}$ for $|j|$ times. For each $j$, $Q^j(Rm)$ is an algebraic combination of derivatives of the curvature tensors of $(M,g)$ up to $|j|^{th}$ order, and is therefore smooth and uniformly bounded. We may combine the $Q^j(Rm)$'s
into other terms in \eqref{l+1 derivative primitive} and rewrite it into:

\begin{eqnarray}
& &\Delta\left(\p^{I} \varphi_s - \p^{I} v_s\right) \nonumber \\
&=& -s^2he^{\varphi_s} \left[1-\frac{Q^I(Rm)}{s^2he^{\varphi_s}}\right]\left(\p^I \varphi_s - \p^I v_s\right) \nonumber \\
& &+\sum_{j\in \M}\left(A_{j,s}+B_{j,s}\right) \nonumber \\
& &+C_s-\p^I E_s, \nonumber \\
\label{l+1 derivative of the difference}
\end{eqnarray}

\noindent where

\begin{equation}
A_{j,s}=\sum_{m^j(t)\neq ((j),(1))} a_{m^j(t)}\left(\p^{I-j}h\right)\left[\left(\p^{m_i}v_s\right)^{t_i}
\left(s^2e^{v_s}-s^2e^{\varphi_s}\right)+\left(\left(\p^{m_i}v_s\right)^{t_i}-\left(\p^{m_i}\varphi_s\right)
^{t_i}\right)s^2e^{\varphi_s}\right],
\label{definition of A j s}
\end{equation}

\[B_{j,s}=
\begin{cases}

\left[a_{((j),(1))}\left(\p^{I-j}h\right)-\frac{Q^j(Rm)}{s^2e^{v_s}}
\right]s^2e^{v_s}\left(\p^j v_s\right)  \\
-\left[a_{((j),(1))}\left(\p^{I-j}h\right)-\frac{Q^j(Rm)}
{s^2e^{\varphi_s}}\right]s^2e^{\varphi_s}\left(\p^j\varphi_s\right);

&j \in M^l \\
0; & j=I
\label{definition of B j s}
\end{cases}
\]

\noindent and

\begin{equation}
C_s=-h\left(\p^I v_s\right)\left[s^2e^{\varphi_s}\right]\left(1-e^{v_s-\varphi_s}\right).
\label{definition of C s}
\end{equation}

\noindent One easily observes that for all $j$,

\begin{equation}
\lim_{s\to \infty} \Lnorm{\frac{A_{j,s}}{s^2}}{\infty}
= \lim_{s\to \infty} \Lnorm{\frac{B_{j,s}}{s^2}}{\infty}=\lim_{s\to \infty} \Lnorm{\frac{C_s}{s^2}}{\infty}=0.
\label{decay of A B C}
\end{equation}

\noindent The decays of $\frac{A_{j,s}}{s^2}$ and $\frac{C_s}{s^2}$ follow easily from inductive hypothesis
(all $j$ are of lengths no greater than $l$), Lemma \ref{convergence of powers}, Claim \eqref{convergence of varphi s and v s}, \eqref{convergence of powers of functions}, and the facts that $v_s$ are uniformly bounded in all Sobolev spaces and $\varphi_s$ is uniformly bounded in $L^\infty$.
These facts also imply the decay of $\frac{B_{j,s}}{s^2}$. Indeed, by Claim \eqref{convergence of varphi s and v s},
the smooth function $\rho_j(s)$ defined by

\[\rho_j(s):= \frac{Q^j(Rm)}{s^2e^{\varphi_s}}- \frac{Q^j(Rm)}{s^2e^{v_s}}\]

\noindent decays to 0 in $L^\infty$. One can then rewrite

\[B_{j,s} =
\begin{cases}
\left[a_{((j),(1))}\left(\p^{I-j}h\right)-\frac{Q^j(Rm)}{s^2e^{\varphi_s}}\right]
\left[\left(\p^j\varphi_s\right)\left(s^2e^{v_s}-s^2e^{\varphi_s}\right)+\left(\p^jv_s-\p^j\varphi_s\right)s^2e^{v_s}\right] \\
+\rho_j(s)s^2e^{v_s}\left(\p^jv_s\right); \\
& j \in M^l  \\
0; &j=I
\label{B j s rewrite}
\end{cases}\]

\noindent and the decay of $\frac{B_{j,s}}{s^2}$ in $L^\infty$ follows.

\noindent We are in the position to re-apply the maximum principle as in the base case $|I|=0$. Let $x_s \in M$ be the point so that

\[\p^{I}\varphi_s(x_s)-\p^{I}v_s(x_s) = \sup_{x\in M} \{\p^{I}\varphi_s(x)-\p^{I}v_s(x)\}.\]

\noindent Again, the Laplacian has to be non-positive at $x_s$, and we have, for all $x\in M$, that

\begin{eqnarray}
0&\geq&\Delta\left(\p^{I} \varphi_s - \p^{I} v_s\right)(x_s) \nonumber \\
&=& -s^2h(x_s)e^{\varphi_s(x_s)} \left[1-\frac{Q^I(Rm)}{s^2he^{\varphi_s}}\right](x_s)
\left(\p^I \varphi_s(x_s) - \p^I v_s(x_s)\right) \nonumber \\
& &+\sum_{j\in \M}\left(A_{j,s}(x_s)+B_{j,s}(x_s)\right) \nonumber \\
& &+C_s(x_s)-\p^I E_s(x_s), \nonumber \\
&\geq& -s^2h(x_s)e^{\varphi_s(x_s)} \left[1-\frac{Q^I(Rm)}{s^2he^{\varphi_s}}\right](x_s)
\left(\p^I \varphi_s(x) - \p^I v_s(x)\right) \nonumber \\
& &+\sum_{j\in \M}\left(A_{j,s}(x_s)+B_{j,s}(x_s)\right) \nonumber \\
& &+C_s(x_s)-\p^I E_s(x_s), \nonumber \\
\label{horrible inequality}
\end{eqnarray}

\noindent The two expressions before and after the second $\geq$ are identical except that we replace $x_s$ with
$x$ in the difference function $\p^I\varphi_s-\p^I v_s$ on the first line after the second $\geq$. For large enough $s$, we
have

\[1-\frac{Q^I(Rm)}{s^2he^{\varphi_s}} > 0\]

\noindent on $M$ and we may rearrange the \eqref{horrible inequality} without reversing the direction of inequalities:

\begin{eqnarray}
& &\p^I\varphi_s(x)-\p^Iv_s(x) \nonumber \\
&\leq& \frac{e^{-\varphi_s(x_s)}}{h(x_s) \left[1-\frac{Q^I(Rm)}{s^2he^{\varphi_s}}\right](x_s)}
\left(\sum_{j\in \M}\left[\frac{A_{j,s}(x_s)}{s^2}+\frac{B_{j,s}(x_s)}{s^2}\right]+\frac{C_s(x_s)}{s^2}-\frac{\p^I E_s(x_s)}{s^2}\right) \nonumber \\
\label{upper bound for l+1 derivative}
\end{eqnarray}

\noindent By \eqref{decay of A B C} and the fact that $E_s \to 0$ in all Sobolev spaces, the right hand side of this inequality
decays to 0 as $s \to \infty$.

The lower bound for $\p^{I}\varphi_s(x) - \p^{I}v_s(x)$ is obtained similarly. For each $s$, there is a special
point $y_s \in M$ such that

\[\p^{I}\varphi_s(y_s)-\p^{I}v_s(y_s) = \inf_{x\in M} \{\p^{I}\varphi_s(x)-\p^{I}v_s(x)\}.\]

\noindent The Laplacian of $\p^I \varphi_s - \p^I v_s$ has to be non-negative at $y_s$. Using identical arguments
as the ones for upper bound \eqref{horrible inequality} in reverse direction, we have, for all $x\in M$, that

\begin{eqnarray}
& &\p^I\varphi_s(x)-\p^Iv_s(x) \nonumber \\
&\geq& \frac{e^{-\varphi_s(y_s)}}{h(y_s) \left[1-\frac{Q^I(Rm)}{s^2he^{\varphi_s}}\right](y_s)}
\left(\sum_{j\in \M}\left[\frac{A_{j,s}(y_s)}{s^2}+\frac{B_{j,s}(y_s)}{s^2}\right]+\frac{C_s(y_s)}{s^2}-
\frac{\p^I E_s(y_s)}{s^2}\right) \nonumber \\
\label{lower bound for l+1 derivative}
\end{eqnarray}

\noindent The right hand side again decays to 0 uniformly as $s \to \infty$ with the same
arguments as in \eqref{upper bound for l+1 derivative}. Inequalities \eqref{upper bound for l+1 derivative}
 and \eqref{lower bound for l+1 derivative} establish the inductive step, and the lemma is therefore proved.

\end{proof}

With Lemma \ref{smooth convergence of function with approximations} established, the Main Theorem \ref{Main Theorem} follows trivially. Indeed, for all $l$,$p$, we have

\[\Wnorm{\varphi_s - \log\left(\frac{c_2}{-h}\right)}{l}{\infty} \leq \Wnorm{\varphi_s - v_s}{l}{\infty}+
\Wnorm{v_s - \log\left(\frac{c_2}{-h}\right)}{l}{\infty}\]

\noindent and the right hand side converge to 0 as $s \to \infty$. Theorem \ref{Main Theorem} then follows easily from the continuous embedding

\[\W{l}{\infty} \hookrightarrow \W{l}{p}\]

\noindent for any $l\in \mathbb{N}$ and $p \in [1,\infty]$.

\end{proof}

\section{Baptista's Conjecture}

We come back to the Riemann surface $M=\Sigma$. The results collected so far prove a conjecture posed by Baptista \cite{Ba}. It asserts that the natural $L^2$ metric on $\nu_{k,0}(s)$, when pulled back to $Hol_r(\Sigma,\mathbb{CP}^{k-1})$ via $\Phi_s$ described in Lemma \ref{BDW map}, evolves to a familiar one as $s \to \infty$. We prove this claim affirmatively. Throughout this section, we denote $z$ (and $\bar{z}$) as the local complex coordinate of $\Sigma$, $[z_0:\ldots:z_{k-1}]$ as the local homogeneous coordinates of $\pk$, and $(w_1,\ldots,w_m)$ as the local coordinate for $\Hol$, where $m=kr-(k-1)(b-1)$, as described in the remarks before Lemma \ref{BDW map}.

\subsection{The Evolution of $L^2$ Metrics on $\nu_{k,0}(s)$}

We start with the definition of natural $L^2$ metric on
$\mathcal{A}(H)\times \Omega^0(L)^{\oplus k}$, which is a special case of (4) in \cite{Ba}. At $(D_s,\phi_s) \in \mathcal{A}(H)\times \Omega^0(L)^{\oplus k}$, we define

\begin{equation}
g_{s}((\dot{A_{s}},\dot{\phi_{s}}),(\dot{A_{s}},\dot{\phi_{s}}))=\int_{\Sigma}\frac{1}{2s^{2}}\dot{A_s}\wedge
\bar{*}\dot{A_s}+<\dot{\phi_{s}},\dot{\phi_{s}}>_{H}vol_{\Sigma}.
\label{L 2 metric}
\end{equation}

\noindent Here, $(\dot{A}_s, \dot{\phi}_s)$ denotes a tangent vector in $T_{(D_s,\phi_s)}(\mathcal{A}(H)\times \Omega^0(L)^{\oplus k}) \simeq \Omega^1(\Sigma) \times \Omega^0(L)^{\oplus k}$. The identification is justified by the fact that $\Omega^0(L)$ is a vector space and $\mathcal{A}(H)$ is an affine space modeled on the vector space $\Omega^1(\Sigma)$, the space of complex valued one forms on $\Sigma$. (cf. Chapter V of \cite{K}). This identification also justifies the applications of Hodge star $\bar{*}$ and $<\cdot,\cdot>_H$ in the integrand of \eqref{L 2 metric}, since $(\dot{A_s},\dot{\phi_s})$ lies in essentially isomorphic spaces as $(D_s,\phi_s)$ does. By choosing tangent vectors orthogonal to $\mathcal{G}$-gauge transformations, \eqref{L 2 metric} descends to a well defined metric on the quotient space $\left(\mathcal{A}(H) \times \Omega^0(L)^{\oplus k}\right) / \mathcal{G}$ and restricts to the open subset $\nu_{k,0}(s)$.

The $L^2$ metric for $\Hol$ is also well known, with Fubini-Study metric endowed on $\mathbb{CP}^{k-1}$. Given $f \in Hol_r(\Sigma,\mathbb{CP}^{k-1})$, the tangent space of $\Hol$ at $f$ can be identified with the space of sections of the pullback bundle of $T \mathbb{CP}^{k-1}$ via $f$:

\[T_f\Hol \simeq \Gamma (f^* T\mathbb{CP}^{k-1}).\]

\noindent Given $u,v \in T_f \Hol$,  they can be viewed as a pullbacked sections on $\Sigma$, which can be pushed forward by $f$ to be tangent vectors on $\mathbb{CP}^{k-1}$, on which Fubini-Study metric can be applied. We define

\begin{equation}
\left<u,v\right>_{L^2}=\int_\Sigma \left<f_*u,f_*v\right>_{FS}vol_\Sigma.
\label{definition of L 2 metric on maps}
\end{equation}

\noindent Here, the $f_*$ denotes the pushforward of $f$.

Recall the diffeomorphic correspondence in Lemma \ref{BDW map}

\[\Phi_s: \Hol \to \nu_{k,0}(s).\]

\noindent We are interested in pulling back $g_s$ in \eqref{L 2 metric} to $Hol_r(\Sigma,\mathbb{CP}^{k-1})$ via $\Phi_s$, denoted by $g^*_s$, and comparing it with $<\cdot,\cdot>_{L^2}$ in \eqref{definition of L 2 metric on maps}. It was conjectured by Baptista that, roughly, $g_s$ approaches a constant multiple of $<\cdot,\cdot>_{L^2}$ as $s \to \infty$.

We carefully list the required data to proceed our analysis. Start with a holomorphic map $\tilde{\phi}: \Sigma \to \mathbb{CP}^{k-1}$. Equip $\mathbb{CP}^{k-1}$ with the standard Fubini-Study metric $g_{FS}$. On the coordinate chart $U_i \subset \pk$ where $z_i \neq 0$, the expression of K\"ahler form of $g_{FS}$ is well known:

\begin{equation}
\omega_{FS} = \frac{\sqrt{-1}}{2\pi} \p \bar{\p} \log \left(\sum_{l=0}^n \left|\frac{z_l}{z_i}\right|^2\right).
\label{Kahler Form of Fubini Study}
\end{equation}

\noindent This form is also known to be globally defined. There is then a natural Hermitian metric on $\Oone$ whose curvature form is a multiple of $\omega_{FS}$. Explicitly, the metric is given locally at $[z_0:\ldots:z_{k-1}] \in \mathbb{CP}^{k-1}$ by

\[H_{FS}(\cdot) := \frac{1}{\sum_{i=1}^k |z_i|^2} |\cdot|^2,\]

\noindent where $|\cdot|$ is the standard Euclidean flat metric in the local trivialization of $\Oone$ over $U_i$. $H_{FS}$ carries the feature that its curvature form $F_{FS}$ satisfies

\[\frac{\sqrt{-1}}{2\pi}  F_{FS} = \omega_{FS}.\]

\noindent Therefore,

\[\sqrt{-1} F_{FS} = \frac{1}{2\pi} (\omega_{FS},\omega_{FS})_{\omega_{FS}} = \frac{1}{2\pi}.\]

\noindent See, for example, section 1.2 of \cite{GH} for more details. $\omega_{FS}$ is the generator for $H^2(\pk,\mathbb{Z})$, that is, $\int_{\pk} [\omega_{FS}]^{k-1} =1$. In \cite{Ba}, the author used the convention for the K\"ahler form $\omega_{\pk} = \pi \omega_{FS}$, and referred to the normalized form $\omega_{FS}$ as $\omega_{\text{norm }FS}$.

Recall the pullback construction of the line bundle $L$, sections $\phi$, and background Hermitian metric arisen from $\tilde{\phi}$, as in Lemma \ref{BDW map}:

\[
\begin{diagram}
\node{(L,H)}\arrow{s} \node{(\Oone,H_{FS})}  \\
\node{\Sigma} \arrow{e,t}{\tilde{\phi}} \node{\mathbb{CP}^{k-1}} \arrow{n,r}{s_1,\ldots,s_k}
\end{diagram}
\]

\noindent where $L:=\tilde{\phi}^*\Oone$ and $H:=\tilde{\phi}^*H_{FS}$.

The global hyperplane sections $s_1,\ldots,s_k$ on $\Oone$ are pulled back to $L$:

\[\phi := (\phi_i := \tilde{\phi}^*s_i)_i,\]

\noindent and $\tilde{\phi}$ also defines a holomorphic structure $\bar{\p}_L$ by pulling back the standard complex structure $\bar{\p}_{\mathbb{CP}^{k-1}}$ on $\Oone$. By the definition of $H_{FS}$ on $\Oone$, it is automatic that

\begin{equation}
\sum_{i=1}^k |\phi_i|^2_H = 1.
\label{constant norm}
\end{equation}

We describe the variations of holomorphic maps and their corresponding pushforwards on $\nu_{k,0}(s)$. Given $\dot{\tilde{\phi}} \in T_{\tilde{\phi}}Hol_r(\Sigma,\mathbb{CP}^{k-1})\simeq \Gamma (\tilde{\phi}^* T\mathbb{CP}^{k-1})$, we construct a smoothly varying curve $\tilde{\phi}(t)$ in $Hol_r (\Sigma,\mathbb{CP}^{k-1})$ so that $\tilde{\phi}(0)=\tilde{\phi}$ and $\frac{\p}{\p t}|_{t=0}\tilde\phi (t)=\dot{\tilde{\phi}}$. $\tilde{\phi}(t)$ is locally expressed on $U_i$ as

\begin{equation}
\tilde{\phi}(t)=\left[\tilde{\phi_1}(t),\ldots,\tilde{\phi_k}(t)\right].
\label{curve of holomorphic maps}
\end{equation}

The corresponding family of sections in $\nu_{k,0}(s)$ are then defined by pulling back the global sections $s_1,\ldots,s_k$ via $\tilde{\phi}(t)$:

\begin{equation}
\phi_t=\left[\phi_{1,t},\ldots,\phi_{k,t}\right] \in \Omega^0(L) \times \ldots \times \Omega^0(L),
\label{curve of sections}
\end{equation}

\noindent where

\[\phi_{i,t}  := \left(\tilde{\phi}(t)\right)^*(s_i).\]

\noindent Taking $t$-derivatives of $\tilde{\phi}(t)$'s at $t=0$, we obtain

\begin{equation}
\pderi \tilde{\phi}(t) = \dot{\tilde{\phi}} = \left(\pderi \tilde{\phi}_1(t),\ldots,\pderi \tilde{\phi}_n(t)\right) \in \tilde{\phi}^*\left(T\pk\right).
\label{t derivative of maps}
\end{equation}

\noindent To identify the corresponding infinitesimals on $T_{[D_s,\phi_s]}\nu_{k,0}(s)$, we recall the classical short exact Euler sequence of bundles over $\pk$, summarized from section 3.3 of \cite{GH}:

\begin{equation}
\begin{tikzpicture}[start chain] {
    \node[on chain] {$0$};
    \node[on chain] {$\mathcal{O}_{\pk}$};
    \node[on chain, join={node[above] {$ \iota $}}] {$\Oone^{\oplus k}$};
    \node[on chain, join={node[above] {$\mathcal{E}$}}] {$T\pk$};
    \node[on chain] {$0$};, }
\end{tikzpicture}
\label{Euler sequence}
\end{equation}

\noindent where $\mathcal{O}_{\pk}$ is the trivial line bundle. The map $\iota$ is obtained by twisting the natural inclusion $\mathcal{O}_{\pk}(-1) \subset \mathcal{O}_{\pk}^{\oplus k}$ with $\mathcal{O}_{\pk}(1)$. For the map $\mathcal{E}$, we take $\sigma:=(\sigma_1,\ldots,\sigma_k)\in  \Gamma \left(\Oone^{\oplus k}\right)$, $z=[z_0,\ldots,z_{k-1}] \in U_i \subset \pk$, and $Z:=(Z_0,\ldots Z_k) \in \mathbb{C}^{k} - \{0\}$ so that $\pi(Z)=z$. Here, $\pi$ is the natural projection from $\mathbb{C}^{k} - \{0\}$ to $\pk$. We then define

\begin{equation}
\mathcal{E} (\sigma)|_z = \pi_* \left(\sum_i \sigma_i(z) \frac{\p}{\p Z_i}\right),
\label{surjection of Euler sequence}
\end{equation}

\noindent which is a linear map with kernel

\[\ker \mathcal{E}=\{a(Z_0,\ldots,Z_{k-1})\;\;|\;\;a\in \mathbb{C}\}.\]

\noindent Indeed, the tangent space $T_z \pk$ is spanned by $\{\pi_* \frac{\p}{\p Z_i}\}_{i=0}^{k-1}$ subject to the relation

\[\sum_i Z_i \frac{\p}{\p Z_i} =0.\]

\noindent Setting $a=0$, a section of $T\pk$ is then uniquely associated with a $k$-tuple of global sections of $\mathcal{O}_{\pk}(1)$.

Pulling back the Euler sequence \eqref{Euler sequence} by $\tilde{\phi}$, we obtain a short exact sequence of bundles over $\Sigma$:

 \begin{equation}
\begin{tikzpicture}[start chain] {
    \node[on chain] {$0$};
    \node[on chain] {$\tilde{\phi}^*\mathcal{O}_{\pk}$};
    \node[on chain, join={node[above] {$ \tilde{\phi}^*\iota $}}] {$\tilde{\phi}^*\Oone^{\oplus k}$};
    \node[on chain, join={node[above] {$\tilde{\phi}^*\mathcal{E}$}}] {$\tilde{\phi}^*T\pk$};
    \node[on chain] {$0$};. }
\end{tikzpicture}
\label{Euler sequence pull back}
\end{equation}

\noindent In particular, for $\dot{\tilde{\phi}} \in \Gamma\left(\tilde{\phi}^*T\pk\right)$, the correspondence just discussed associates a unique $k$-tuple of global sections on $L$, denoted by

\[\dot{\phi} := \left(\dot{\phi}_1,\ldots,\dot{\phi}_k\right) \in \Omega^0(L)^{\oplus k}=L^{\oplus k}.\]

 \noindent The family of holomorphic maps $\tilde{\phi}(t)$ also defines a family of line bundles over $\Sigma$:

\[L_t:=\tilde{\phi}(t)^* \Oone.\]

\noindent All bundles are of degree $r$ and therefore isomorphic as complex line bundles. However, each of them is equipped with its own pullback holomorphic structure:

\[\bar{\p}_{L_t} := \tilde{\phi}(t)^* \left(\bar{\p}_{\mathbb{CP}^{k-1}}\right).\]

\noindent Clearly

\[\bar{\p}_L =\bar{\p}_{L_0}.\]

\noindent Each $L_t$ is equipped with a background metric

\[H_t:=\tilde{\phi}(t)^* H_{FS}\]

\noindent and we denote $H:=H_0$.

To analyze $g_s^*$, we need to compute the  pushforward of $\dot{\tilde{\phi}}$ under $\Phi_s$, denoted by $\left(\dot{A_s},\dot{\phi_s}\right)$ as in \eqref{L 2 metric}. For each $t$, our constructions above clearly imply

\[\bar{\p}_{L_t} \phi_{i,t} = 0 \;\; \forall t,i.\]

\noindent By Theorem \ref{Identification of open subset}, we can then find a unique gauge $e^{2u_{s,t}} \in \mathcal{G}_{\mathbb{C}}$ so that

\[\left[D(e^{u_{s,t}}{ }^* \bar{\p}_{L_t}),e^{u_{s,t}}\phi_t \right]\in \nu_{k,0}(s),\]

\noindent where $D(e^{u_{s,t}}{ }^* \bar{\p}_{L_t})$ is the unique $H$-unitary connection compatible with the holomorphic structure $\bar{\p}_{L_t}$. (Readers may review Lemma \ref{identification of moduli space} for the detailed descriptions.) The map $\Phi_s$ is now explicitly written for each $t$:

\[\Phi_s(\tilde{\phi}(t)) = \left[D(e^{u_{s,t}}{ }^* \bar{\p}_{L_t}),e^{u_{s,t}}\phi_t\right].\]

\noindent Recall the gauge action on holomorphic structures:

\[e^{u_{s,t}}{ }^* \bar{\p}_{L_t} = e^{u_{s,t}} \left(\bar{\p}_L e^{-u_{s,t}}\right) = \bar{\p}_{L_t} -\left(\frac{\p u_{s,t}}{\p \bar{z}}\right)d\bar{z},\]

\noindent we have

\begin{equation}
\Phi_s(\tilde{\phi}(t)) = \left[D\left(e^{u_{s,t}} (\bar{\p}_L e^{-u_{s,t}})\right),e^{u_{s,t}}\phi_t\right],
\label{definition of Phi s explicitly}
\end{equation}

\noindent where $D\left(e^{u_{s,t}} (\bar{\p}_L e^{-u_{s,t}})\right)$ is the $H_t$-unitary connection with respect to the holomorphic structure

\[e^{u_{s,t}} \left(\bar{\p}_L e^{-u_{s,t}}\right).\]

At $t=0$, the pushforward of $\dot{\tilde{\phi}}$ can now be readily computed:

\begin{equation}
\dot{\phi}_s = e^{u_s} \dot{\phi}+e^{u_s}\dot{u_s}\phi,
\label{explicit variation of sections}
\end{equation}

\noindent where

\[\dot{u_s} := \frac{\p}{\p t}|_{t=0} u_{s,t}.\]

\noindent \eqref{explicit variation of sections} makes sense as $\phi$ and $\dot{\phi}$ reside in the same space.

$\dot{A_s}$ needs to be computed with caution. Let $\gamma \in \Omega^0(U,L)$ be a local holomorphic frame for $L$ over an open chart $U$, with respect to the holomorphic structure $\bar{\p}_L$. The background Hermitian metric is locally given by a smooth function $H_t$ in this setting. Altering the holomorphic structure, we observe that the section $e^{u_{s,t}} \gamma$ is the local holomorphic frame with respect to the holomorphic structure $e^{u_{s,t}} (\bar{\p}_L e^{-u_{s,t}})$. With respect to this frame, the same background Hermitian metric now has local coordinate description by the smooth function

\[H_t^\prime = H_t e^{2u_{s,t}}.\]

\noindent We then compute the connection form $A_{s,t}$ of $D(e^{u_{s,t}} (\bar{\p}_L e^{-u_{s,t}}))$ using the standard formula of the local expression of $H_t^\prime$-unitary connection forms over $U$ (cf. I.(4.11) in \cite{K}):

\begin{eqnarray}
A_{s,t} &=& (H_t^\prime)^{-1} \p (H_t^\prime) \nonumber \\
       &=& \frac{\left(\frac{\p H_t}{\p z} + 2H_t\frac{\p u_{s,t}}{\p z}\right)}{H_t} dz \nonumber \\
       &=& \left[\frac{\p}{\p z} \left(\log H_t\right) + 2 \frac{\p u_{s,t}}{\p z}\right]dz. \nonumber \\
       \label{connection form explicit descriptions}
\end{eqnarray}

\noindent We differentiate $A_{s,t}$ with respect to $t$ and evaluating it at $t=0$ to obtain $\dot{A_s}$:

\begin{equation}
\dot{A_s} := \frac{\p}{\p t}|_{t=0}A_{s,t} = \frac{\p}{\p z}\left(\frac{\dot{H}}{H}\right)+2 \frac{\p \dot{u_s}}{\p z} dz,.
\label{pushforward of connection forms}
\end{equation}

\noindent where

\[\dot{H} := \frac{\p}{\p t}|_{t=0} H_t.\]

We have now identified the pushforward of the $\dot{\tilde{\phi}}$ under $\Phi_s$:

\begin{eqnarray}
\Phi_{s,*}\left(\dot{\tilde{\phi}}\right) &=& \left(\dot{A_s},\dot{\phi_s}\right) \nonumber \\
&=& \left(\frac{\p}{\p z}\left(\frac{\dot{H}}{H}\right)+2 \frac{\p \dot{u_s}}{\p z} dz,
e^{u_s}\dot{\phi}+e^{u_s}\dot{u_s}\phi\right) \in T_{[D_s,\phi_s]}\nu_{k,0}(s) \nonumber \\
\label{pushforward definition}
\end{eqnarray}

\noindent By the definition of pullback metric, we then have

\begin{eqnarray}
g_s^*\left(\dot{\tilde{\phi}},\dot{\tilde{\phi}}\right) &:=& g_s\left(\Phi_{s,*}\left(\dot{\tilde{\phi}}\right),\Phi_{s,*}\left(\dot{\tilde{\phi}}\right)\right) \nonumber \\
&=& \int_\Sigma \left(\frac{\left|\frac{\p}{\p z}\left(\frac{\dot{H}}{H}\right)+2\frac{\p \dot{u_s}}{\p z}\right|^2}{2s^2} + \left<\dot{\phi},\dot{\phi}\right>_H e^{2u_s} + \left(e^{u_s}\dot{u_s}\right)^2\right) vol_\Sigma, \nonumber \\
\label{L 2 metric pull back}
\end{eqnarray}

This quantity is a real number since $z$, the coordinate of $\Sigma$, is eliminated after integration over $\Sigma$. The second equality above relies the relation \eqref{constant norm}, which implies $\left<\phi,\dot{\phi}\right>_H=0$ and $\left<\phi,\phi\right>_H=1$. One should expect the first and third terms in \eqref{L 2 metric pull back} to vanish as $s \to \infty$, and the second term to approach a multiple of square norm of $\dot{\phi}$. Namely, we expect \eqref{L 2 metric pull back} to approach the (multiple of) $<\cdot,\cdot>_{L^2}$ defined in \eqref{definition of L 2 metric on maps}. This is precisely the statement in the Baptista's Conjecture in \cite{Ba}.

\begin{conjecture}[Baptista's Conjecture]
On $Hol_r(\Sigma,\mathbb{CP}^{k-1})\backsimeq \nu_{k,0}(s) $, $g^*_s$ defined in \eqref{L 2 metric pull back} converges smoothly, as $s \to \infty$, to a multiple of the ordinary $L^2$ metric $<\cdot,\cdot>_{L^2}$ defined in \eqref{definition of L 2 metric on maps} on $Hol_r(\Sigma,\mathbb{CP}^{k-1})$.
\end{conjecture}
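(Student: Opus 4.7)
The plan is to analyze the two integrands in \eqref{L 2 metric pull back} separately, showing that the $s^{-2}$ curvature term vanishes while the sections term converges to a constant multiple of $\langle\cdot,\cdot\rangle_{L^{2}}$. In the present geometric setup $H = \tilde{\phi}^*H_{FS}$ and $\phi_i = \tilde{\phi}^*s_i$, so by the defining property of $H_{FS}$ we have $\sum_{i=1}^k|\phi_i|_H^2 \equiv 1$ on $\Sigma$. The $\infty$-vortex equation $\sum|\phi_i|^2_{K_{\infty}} = 1$ with $K_{\infty} = He^{2u_{\infty}}$ therefore forces $u_{\infty} \equiv 0$, and Main Theorem \ref{Main Theorem}, applied to \eqref{Kazdan Warner Equation} with $-h = \sum|\phi_i|^2_He^{2\psi} = e^{2\psi}$, yields $u_s \to 0$ in $\W{l}{p}$ for every $l$ and $p$. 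In particular $e^{2u_s} \to 1$ smoothly.

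Next I would linearize. Differentiating \eqref{s-vortex 1} at $t = 0$ along the path $\tilde{\phi}(t)$ and using that each $\sum_i|\phi_{i,t}|_{H_t}^2 \equiv 1$ (so its $t$-derivative vanishes) produces the linear elliptic equation
\[
(\Delta - s^2 e^{2u_s})\,\dot{u}_s \;=\; \sqrt{-1}\,\Lambda\,\dot{F}_H,
\]
whose source $\dot{F}_H = \pderi F_{H_t}$ is $s$-independent and smooth. Rescaling $g_s := s^2\dot{u}_s$ converts this to $s^{-2}\Delta g_s - e^{2u_s} g_s = \sqrt{-1}\Lambda\dot{F}_H$, a perturbation of the multiplication operator $-e^{2u_\infty}I = -I$; a bootstrap strictly parallel to the one in Theorem \ref{Kazdan Warner Analysis} (iterative maximum-principle scheme combined with elliptic $L^p$ regularity) then gives $g_s \to -\sqrt{-1}\Lambda \dot{F}_H$ smoothly, so that $\Wnorm{\dot u_s}{l}{\infty} = O(s^{-2})$ for every $l$. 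Consequently, the first integrand in \eqref{L 2 metric pull back} satisfies
\[
\frac{1}{4s^2}\left|\frac{\p}{\p z}\!\left(\frac{\dot H}{H}\right) + 2\,\frac{\p\dot u_s}{\p z}\right|^2 \;=\; O(s^{-2}),
\]
so the first term of $g_s^*$ vanishes uniformly as $s \to \infty$.

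For the second term, $e^{2u_s}\to 1$ smoothly and $\langle\dot\phi,\dot\phi\rangle_H$ is $s$-independent, so dominated convergence yields
\[
\int_\Sigma \langle\dot\phi,\dot\phi\rangle_H\,e^{2u_s}\,vol_\Sigma \;\longrightarrow\; \int_\Sigma \langle\dot\phi,\dot\phi\rangle_H\,vol_\Sigma.
\]
It remains to establish the purely algebraic identification $\langle\dot\phi,\dot\phi\rangle_H = C\,|\tilde\phi_*\dot{\tilde\phi}|^2_{FS}$ for a universal constant $C$ dictated by the normalization of $\omega_{FS}$. This follows from the pulled-back Euler sequence \eqref{Euler sequence pull back}: by construction $\dot\phi \in \tilde{\phi}^*\Oone^{\oplus k}$ satisfies $\tilde{\phi}^*\mathcal{E}(\dot\phi) = \tilde\phi_*\dot{\tilde\phi}$, and the Fubini--Study metric on $T\pk$ is by definition the quotient metric on $\Oone^{\oplus k}/\ker\mathcal{E}$ induced from $H_{FS}^{\oplus k}$. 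The kernel-direction component of $\dot\phi$, which could spoil the identification, is precisely the infinitesimal complex-gauge direction and is eliminated by the gauge-orthogonality condition used to descend $g_s$ to $\nu_{k,0}(s)$; together this gives the pointwise identification at each $\tilde{\phi}$. To upgrade to smooth convergence of metric tensors on $\Hol$, I would repeat Steps 1--2 with $\pderi$ replaced by the partials $\p/\p w_i$ in a local chart $(w_1,\ldots,w_m)$; the bootstrap estimates from Main Theorem \ref{Main Theorem} are uniform on compact families in $\Hol$ and therefore deliver smooth dependence.

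The main obstacle is the geometric identification in the third paragraph: one must verify carefully that the construction of $\dot\phi$ by differentiating pulled-back hyperplane sections, together with the gauge-orthogonality that descends $g_s$ to $\nu_{k,0}(s)$, produces exactly the $H^{\oplus k}_{FS}$-orthogonal lift of $\tilde\phi_*\dot{\tilde\phi}$ in the pulled-back Euler sequence, and to determine the resulting normalization constant $C$. The PDE components (convergence of $u_s$ and the decay of $\dot u_s$) are rather direct consequences of the Main Theorem and its linearized analogue, whose proof is a minor variant of the argument already given in Theorem \ref{Kazdan Warner Analysis}.
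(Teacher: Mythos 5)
Your proposal is correct and follows essentially the same route as the paper: exploit $\sum_i|\phi_i|^2_H\equiv 1$ to reduce to the Main Theorem, control the moduli-direction derivatives of $u_s$ by applying the maximum principle to the differentiated (i.e.\ linearized) scalar vortex equation, and conclude that the $s^{-2}$-weighted connection term vanishes while the section term converges to the Fubini--Study pairing via the Euler sequence. Your rescaled linearized equation $(\Delta - s^2e^{2u_s})\dot u_s = \sqrt{-1}\Lambda\dot F_H$ is exactly the $|R|=1$ case of the paper's Lemma \ref{smooth convergence of function with approximations t} on $\p^R\tilde{\varphi}_s-\p^R\tilde{v}_s$, and the geometric identification you flag as the remaining obstacle is the same step the paper disposes of with ``it is evident from our constructions.''
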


To achieve higher mathematical precision, we state the following notion of convergence.

\begin{definition} [Cheeger-Gromov Convergence] For all $l \in \mathbb{N}$ and $p \geq 1$, a family of $n$-dimensional Riemannian manifolds
$(M_s,g_s)$ is said to converge to a fixed Riemannian manifold $(M,g)$ in $\W{l}{p}$, in the sense of
Cheeger-Gromov, if there is a covering chart $\{U_k, (x_i^k)\}$ on $M$ and a family of
diffeomorphisms $F_s:M \to M_s$, such that

\begin{equation}
 \Wnormloc{F_s^*(g_s)(\frac{\partial}{\partial x_i},\frac{\partial}{\partial x_j}) - g(\frac{\partial}{\partial x_i},\frac{\partial}{\partial x_j})}{l}{p}{U_k}\to 0.
 \label{Cheeger Gromov convergence}
 \end{equation}

\noindent as $s \to \infty$, for all $k$ and $i,j \in \{1,\ldots,n\}$.
\end{definition}

\noindent We state Baptista's Conjecture in this level of mathematical rigor:

\begin{proposition}
[Precise Baptista's Conjecture]
Equipping $\mathbb{CP}^{k-1}$ with the Fubini-Study metric, the sequence of metrics $g_s$ on $\nu_{k,0}(s)$ given by \eqref{L 2 metric} converges in all $H^{l,p}$ (and so smoothly), in the sense of Cheeger-Gromov, to $\frac{1}{2}$ times the $L^2$ metric $<\cdot,\cdot>_{L^2}$ on $Hol_r(\Sigma,\mathbb{CP}^{k-1})$ given by \eqref{definition of L 2 metric on maps}. The family of diffeomorphisms are precisely $\Phi_s$, as constructed in Lemma \ref{BDW map}.
\label{Precise Baptista's Conjecture}
\end{proposition}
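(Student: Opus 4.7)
The plan is to reduce the statement to (i) the smooth convergence $u_s \to u_\infty$ supplied by Theorem~\ref{Main Theorem} and (ii) an analogous convergence for parameter derivatives of $u_s$ along $\Hol$, then split \eqref{L 2 metric pull back} into its two summands and handle them separately. I first observe that because $H = \tilde\phi^* H_{FS}$ and $\phi_i = \tilde\phi^* s_i$, the defining property of $H_{FS}$ on $\Oone$ forces $\sum_i |\phi_i|^2_H \equiv 1$ pointwise on $\Sigma$. Writing $H_\infty = H e^{2u_\infty}$, the third equation of \eqref{infinity-vortex} with $\tau = 1$ then demands $e^{2u_\infty} \equiv 1$, that is, $u_\infty \equiv 0$. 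Translated through $\varphi_s = 2(u_s - \psi)$, this matches the limit $\varphi_\infty = \log(c_2/(-h))$ produced by Theorem~\ref{Main Theorem}, so $u_s \to 0$ and $e^{2u_s} \to 1$ in every $H^{l,p}(\Sigma)$.

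\emph{Parameter derivatives of $u_{s,\bullet}$.} For a smooth family $\tilde\phi(t)$ in $\Hol$, differentiating \eqref{s-vortex 1} in $t$ at $t = 0$ yields
\[
\bigl(\Delta - s^2 e^{2u_s}\bigr)\dot u_s \;=\; \dot F,
\]
with $\dot F = \frac{d}{dt}|_{t=0}(\sqrt{-1}\Lambda F_{H_t})$ independent of $s$. Crucially, the ``$\dot\phi$'' source is absent: since $\sum_i |\phi_{i,t}|^2_{H_t} \equiv 1$ for every $t$, its $t$-derivative vanishes. The maximum principle (Lemma~\ref{Maximum Principle}) immediately yields $\|\dot u_s\|_{L^\infty} = O(s^{-2})$, and the monotone-iteration/bootstrap scheme of Theorems~\ref{Kazdan Warner Analysis} and~\ref{Main Theorem} upgrades this to smooth convergence $\dot u_s \to 0$ in every $H^{l,p}(\Sigma)$. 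Iterating in independent $\Hol$-directions yields smooth convergence of the mixed higher derivatives $\partial_{w_{i_1}}\cdots \partial_{w_{i_N}} u_s$.

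\emph{Passage to the limit in \eqref{L 2 metric pull back}.} The prefactor $1/s^2$ in the first term, combined with the $H^{l,p}$-uniform bound on $\partial(\dot H/H) + 2\partial \dot u_s$ from the previous step, forces smooth decay of that term to zero. For the second term, the smooth convergence $e^{2u_s} \to 1$ gives
\[
\int_\Sigma \langle \dot\phi, \dot\phi\rangle_H\, e^{2u_s}\, vol_\Sigma \;\longrightarrow\; \int_\Sigma \langle \dot\phi, \dot\phi\rangle_H\, vol_\Sigma.
\]
Choosing $\dot\phi$ as the canonical lift orthogonal to $\ker(\tilde\phi^*\mathcal E)$, the pullback Euler sequence \eqref{Euler sequence pull back} identifies $\dot\phi$ with $\dot{\tilde\phi}$; since $g_{FS}$ on $T\pk$ is the quotient metric from $H_{FS}$ on $\Oone^{\oplus k}$, up to a universal normalization $C$, one has the pointwise identity $\langle \dot\phi, \dot\phi\rangle_H = C\,\langle \tilde\phi_* \dot{\tilde\phi}, \tilde\phi_* \dot{\tilde\phi}\rangle_{FS}$, whose integral is exactly $C\,\langle \dot{\tilde\phi}, \dot{\tilde\phi}\rangle_{L^2}$ from \eqref{definition of L 2 metric on maps}.

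\emph{Cheeger--Gromov upgrade.} Using the local charts $(w_1,\dots,w_m)$ on $\Hol$ recalled before Lemma~\ref{BDW map} and polarizing \eqref{L 2 metric pull back}, applying the argument above with $\partial_t$ replaced by $\partial_{w_i}$ and $\partial_{w_j}$ produces pointwise smooth convergence of each component $g_s^*(\partial_{w_i}, \partial_{w_j})$ to the matching component of $C\,\langle\cdot,\cdot\rangle_{L^2}$. The further $\partial_{w_\ell}$-differentiations required by Definition~\ref{Cheeger Gromov convergence} are then handled by repeating the maximum-principle scheme at higher order. The main obstacle is precisely this bookkeeping: the linearized PDEs satisfied by the higher mixed $w$-derivatives of $u_{s,\bullet}$ couple together in a nonlinear way, and one must verify that the iteration scheme of Theorems~\ref{Kazdan Warner Analysis}--\ref{Main Theorem} continues to deliver uniform-in-$s$ $H^{l,p}$ estimates at every order. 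Once that is settled, Definition~\ref{Cheeger Gromov convergence} holds with $F_s = \Phi_s$, completing the proof.
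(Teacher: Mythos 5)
Your overall architecture matches the paper's: use $\sum_i|\phi_{i,t}|^2_{H_t}\equiv 1$ to simplify the scalar PDE, split \eqref{L 2 metric pull back} into the connection term (killed by the $1/s^2$ prefactor together with uniform bounds on $\partial_z\dot u_s$) and the section term (handled by convergence of $e^{2u_s}$ to a constant and the Euler-sequence identification of $\langle\dot\phi,\dot\phi\rangle_H$ with the Fubini--Study pairing). Your first-order step is also sound and is in fact a cleaner route than the paper's at that order: linearizing \eqref{s-vortex 1} in $t$ gives $(\Delta-s^2e^{2u_s})\dot u_s=\dot F$ with no $\dot\phi$ source, and the maximum principle plus the uniform positive lower bound on $e^{2u_s}$ (from Theorem \ref{Main Theorem}) does yield $\|\dot u_s\|_{L^\infty}=O(s^{-2})$. (Minor point: your claim $u_\infty\equiv 0$ versus the paper's $-\tilde he^{\tilde\varphi_s}\to\tfrac12$, i.e.\ $e^{2u_\infty}=\tfrac12$, differ by a constant traceable to the normalization of $c(s)$; since the assertion is only ``a multiple of'' the $L^2$ metric, this is harmless.)

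The genuine gap is the step you yourself flag as ``the main obstacle.'' The Cheeger--Gromov statement requires, for every multi-index $R$ in the $w$-coordinates, uniform-in-$s$ decay of $\partial^R(-\tilde he^{2\tilde\varphi_s})$ for $|R|\ge 1$ and uniform bounds on $\partial^R\partial_z\partial_{w_i}u_s$, and these do not follow from elliptic bootstrapping of the linearized equation alone: standard $L^p$ regularity applied to $(\Delta-s^2e^{2u_s})\dot u_s=\dot F$ gives uniform boundedness of $\dot u_s$ in $H^{2,p}$ but not the decay of its derivatives, and at higher order the equations for $\partial^R u_s$ acquire source terms that are nonlinear combinations of lower-order mixed derivatives multiplied by $s^2e^{\varphi_s}$, each of which must be shown to be $O(1)$ (or $o(s^2)$) uniformly before the maximum principle can be reapplied. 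This is precisely the content of the paper's Lemma \ref{smooth convergence of function with approximations t} --- the comparison with the explicit approximate solutions $\tilde v_s$, the key claim $\|s^2(e^{\tilde\varphi_s}-e^{\tilde v_s})\|_{L^\infty}\to 0$, and the term-by-term control of the quantities $\tilde A_{j,s}$, $\tilde B_{j,s}$, $\tilde C_s$, $\hat D_s$ --- and it constitutes the technical bulk of the proof. As written, your argument establishes the pointwise ($|R|=0$ and $|R|=1$) convergence of the metric coefficients but defers, rather than proves, the higher-order estimates needed for convergence in all $H^{l,p}$; until that induction is carried out (either by your linearization scheme or by the paper's comparison method), the proof is incomplete.
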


\begin{proof}

Throughout the proof, we use the following abbreviations for the initial value and variation of a family of functions $f_t$ with parameter $t$:

\[f := f_0,\]

\noindent and

\[\dot{f} := \pderi f_t.\]

We first recall that for each $t$, the $k$-sections $\phi_t$ give rise to the function

\begin{equation}
h_t = -e^{2\psi_t}\sum_{i=1}^k  |\phi_{i,t}|_{H_t}^2
\label{the function h for Baptista}
\end{equation}

\noindent as in \eqref{equaiton for psi} and \eqref{Kazdan Warner Equation} of section 3, where

\[\Delta \psi_t = \sqrt{-1} \Lambda F_{H_t} -c_1.\]

\noindent However, since $\phi_t$ and $H_t$ are pullbacked from the sections $s_1,\ldots,s_k$ on $\Oone$ with constant $H_{FS}$ norm of 1, it is clear that $\sum_{i=1}^k  |\phi_{i,t}|_{H_t}^2=1\;\;\forall t$, and

\[h_t=-e^{2\psi_t}.\]

For each $t$, recall the relation of $u_{s,t}$ and $\varphi_{s,t}$:

\[\varphi_{s,t}=2(u_{s,t}-\psi_t).\]

\noindent It follows that $e^{2u_{s,t}}=-h_te^{\varphi_{s,t}}$ and

\[\frac{\p \dot{u_s}}{\p z}=\frac{1}{2} \left(\frac{\p \dot{\varphi_s}}{\p z}+2\frac{\p \dot{\psi}}{\p z}\right).\]

\noindent The pullback metric $g_s^*$ \eqref{L 2 metric pull back} can be rewritten as

\begin{equation}
g_s^*\left(\dot{\tilde{\phi}},\dot{\tilde{\phi}}\right)=\int_\Sigma \left(\frac{\left|\frac{\p}{\p z} \left(\frac{\dot{H}}{H}\right)+\frac{\p \dot{\varphi_s}}{\p z}+2\frac{\p \dot{\psi}}{\p z}\right|^2}{2s^2}+\left<\dot{\phi},\dot{\phi}\right>_H\left(-he^{\varphi_{s}}\right) -\frac{1}{2}\dot{\left(he^{2\varphi_s}\right)}\left(\dot{u_s}\right)\right)vol_\Sigma
\label{L 2 metric pull back alternative}
\end{equation}

\noindent The dot over $-he^{2\varphi_s}$ above is applied to the entire product. It is also evident from our constructions that

\[\left<\dot{\phi},\dot{\phi}\right>_H=\left<\dot{\tilde{\phi}},\dot{\tilde{\phi}}\right>_{H_{FS}}.\]

We now allow $\tilde{\phi}$ to vary freely on $\Hol$. Each $\tilde{\phi} \in \Hol$ determines corresponding Hermitian structures and functions $H_{\tilde{\phi}}$, $h_{\tilde{\phi}}$, $\psi_{\tilde{\phi}}$, $u_{s,\tilde{\phi}}$, and $\varphi_{\tilde{\phi}}$ on $\Sigma$, as in the constructions in section 3. The subscripts did not appear there since we fixed a holomorphic function to begin the entire argument. To emphasize the variation on $\Hol$ in the present situation, we amend the notations of the functions discussed in section 3:

\[\tilde{H}, \tilde{h}, \tilde{\psi}, \tilde{u_s}, \tilde{\varphi_s} : \Hol \times \Sigma \to \mathbb{R},\]

\noindent so that $\tilde{H}(\tilde{\phi},z)=H_{\tilde{\phi}}(z)$ and similarly for other functions. These functions are all smooth, as their dependencies on holomorphic maps are smooth.

Before establishing the convergence, we note that it is sufficient to prove the convergence of $g_s^*$ in a coordinate neighborhood $\mathcal{U}$ of $\tilde{\phi}$, as Cheeger-Gromov convergence is a local statement. Moreover, using polarizing identity of the Hermitian structure, it is sufficient to establish the convergence \eqref{Cheeger Gromov convergence} for some $i=j$. Fix a precompact coordinate patch $\mathcal{U} \subset \Hol$ with coordinates $(w_1,\dots,w_m)$ centered at $\tilde{\phi}$. We remind the readers that the coordinate description of $\Hol$ is given in the remark immediately before Lemma \ref{BDW map}. Let

\[\left(\xi_1,\ldots,\xi_m\right)\]

\noindent be the coordinate local frame of $T\Hol$ over $\mathcal{U}$ so that for all $f \in C^\infty(\Hol)$ and all $\tilde{\eta} \in \mathcal{U}$,

\[\xi_i (\tilde{\eta}) (f) = \frac{\p}{\p w_i}|_{\tilde{\eta}} f.\]

\noindent Setting $\dot{\tilde{\phi}} = \xi_i$ in \eqref{L 2 metric pull back alternative} then defines a real smooth function on $\mathcal{U}$. Precisely, at $\tilde{\eta} \in \mathcal{U}$, we define

\begin{eqnarray}
F_s^i(\tilde{\eta}) &:=& g_s^*\left(\xi_i(\tilde{\eta}),\xi_i(\tilde{\eta}) \right) \nonumber \\
&=& \int_\Sigma \left[\frac{\left|\frac{\p}{\p z} \left(\wpderi \log \tilde{H}\right)+\frac{\p}{\p z}  \left(\wpderi \tilde{\varphi_s}\right) + 2\frac{\p}{\p z}\left(\wpderi \tilde{\psi}\right)\right|^2}{2s^2}\right] vol_\Sigma \nonumber \\
& &+\int_\Sigma \left[\left<\xi_i(\tilde{\eta}),\xi_i(\tilde{\eta}) \right>_{H_{FS}}\left[-\tilde{h}(\tilde{\eta},z)\right]e^{\tilde{\phi_s}(\tilde{\eta},z)}\right]\nonumber\\
& &-\frac{1}{2}\int_\Sigma\left(\wpderi\tilde{h}e^{2\tilde{\varphi_s}}\right)\left(\wpderi\tilde{u_s}\right) vol_\Sigma. \label{L 2 metric pull back free}
\end{eqnarray}

\noindent Once again, the $z$ variable is integrated out on the right hand side and $F_s^i$ is solely a function on $\mathcal{U}$.

The derivatives of $F_s^i$ can be computed accordingly. For a multi-index $R \in \mathbb{N}^m$, we may compute

\[\p^R F_s^i.\]

\noindent Here, as usual, the multi-index convention is adopted. For $R=(r_1,\ldots,r_m)$,

\[\p^R :=\frac{\p^{r_1}\cdots\p^{r_m}}{\p w_1^{r_1} \cdots \p w_m^{r_m}}.\]

\noindent In this section, we reserve this notation for differentiations on coordinates of $\mathcal{U}$ only. Since all functions in the integrand of \eqref{L 2 metric pull back free} are smooth, we may interchange $\p^R$ with the integration:

\begin{eqnarray}
& &\p^R F_s^i (\tilde{\eta}) \nonumber \\
&=&\int_\Sigma  \frac{\pRi \left| \frac{\p}{\p z} \left(\wpder \log \tilde{H} \right) +  \frac{\p}{\p z}\wpder \tilde{\varphi_{s}}-2 \frac{\p}{\p z}\left(\wpder \tilde{\psi}\right)\right|^2}{2s^2} vol_\Sigma \nonumber \\
& &+\int_\Sigma \left[\left[\p^R\left<\xi_i,\xi_i\right>_{H_{FS}}\right]\left(-\tilde{h}e^{2\tilde{\varphi_{s}}}\right)\right]\vline_{(\tilde{\eta},z)}  vol_\Sigma \nonumber \\
& &-\int_\Sigma \sum_{\alpha \in \R}\left[A_\alpha\p^\alpha\left(\tilde{h}e^{\tilde{\varphi_{s}}}\right) B_{R-\alpha}\right]\vline_{(\tilde{\eta},z)} vol_\Sigma. \nonumber \\
& &-\frac{1}{2}\int_\Sigma \pRi \left[ \left(\wpder\tilde{h}e^{2\tilde{\varphi_s}}\right)\left(\wpder\tilde{u_s}\right)\right]
\label{derivative of L 2 metric}
\end{eqnarray}

\noindent Here, again, $M^R$ is the set of all multi-indices with lengths less than $|R|$, as defined in the proof of Lemma \ref{smooth convergence of function with approximations}. $B_{R-\alpha}$ are smooth functions defined by

\[B_{R-\alpha}=\p^{R-\alpha}\left<\xi_i,\xi_i\right>_{H_{FS}},\]

\noindent which are independent of $s$. $A_\alpha$'s are constants. From the expression of \eqref{derivative of L 2 metric}, the conclusion of Proposition \ref{Precise Baptista's Conjecture} then holds true on $\mathcal{U}$ once we verify the following three conditions for all multi-index $R$ and all $\tilde{\eta}\in \mathcal{U}$:

\begin{equation}
\lim_{s\to\infty}\Lnormloc{\frac{\pRi \left| \frac{\p}{\p z} \left(\wpder \log \tilde{H} \right) +  \frac{\p}{\p z}\wpder \tilde{\varphi_{s}}-2 \frac{\p}{\p z}\left(\wpder \tilde{\psi}\right)\right|^2}{2s^2}}{\infty}{\Sigma}=0;
\label{convergence 1}
\end{equation}

\begin{equation}
\lim_{s\to\infty}\Lnormloc{\p^\alpha\left(\tilde{h}e^{2\tilde{\varphi_{s}}}\right)}{\infty}{\Sigma}=0 \;\;\forall \alpha \text{ such that }1\leq |\alpha| \leq |R|;
\label{convergence 2}
\end{equation}

\noindent and

\begin{equation}
-\tilde{h}e^{\tilde{\varphi_s}}|_{(\tilde{\eta},z)} \to \frac{1}{2} \text{ in } L^\infty(\Sigma) \text{ as } s \to \infty.
\label{convergence 6}
\end{equation}

\noindent Here, $\|\cdot\|_{L^\infty(\Sigma)}$ denotes the $L^\infty$ norm of the space of uniformly bounded functions on $\Sigma$, where $\Sigma$ is amended to emphasize the fact that after evaluating the three expressions above at a particular point $\tilde{\eta} \in \mathcal{U}$, they are functions solely on $\Sigma$. \eqref{convergence 6} follows directly from the Main Theorem \ref{Main Theorem}. To verify the other two claims,
we similarly define the approximated solutions $\tilde{v}_s$ and error $\tilde{E}_s$ on $\mathcal{U}\times \Sigma$ as in the proof of the Main Theorem \ref{Main Theorem}:

\begin{equation}
\tilde{v}_s := \log \left(\frac{\Delta_\Sigma\left(-\log(-\tilde{h})\right)-c(s)}{-s^2\tilde{h}}\right)
\label{definition of approximated solutions t}
\end{equation}

\noindent with error

\begin{equation}
\tilde{E}_s := \Delta_\Sigma \log \left(\frac{\Delta_\Sigma \left(-\log(-\tilde{h})\right)-c(s)}{s^2}\right)
\label{definition of the error term t}
\end{equation}

\noindent so that

\[\Delta_\Sigma \tilde{v}_{s} + s^2\tilde{h}e^{\tilde{v}_{s}} -c(s) = \tilde{E}_{s}.\]

\noindent Here, $\Delta_\Sigma$ denotes the Laplacian with respect to coordinates of $\Sigma$ only and $c(s)=2c_1-\frac{1}{2}s^2$ as in the beginning of section 3. One can readily see that for all $R\in \mathbb{N}^m$ and $\tilde{\eta}\in \mathcal{U}$,

\begin{equation}
\Wnormloc{\pRi \tilde{v}_{s}}{l}{\infty}{\Sigma}\leq C_R<\infty\;\;\forall \;s,
\label{convergence 3}
\end{equation}

\begin{equation}
\lim_{s\to\infty}\Lnormloc{\p^\alpha|_{(\tilde{\eta},z)} \left(\tilde{h}e^{\tilde{v}_{s}}\right)}{\infty}{\Sigma}=0 \;\;\forall \alpha \text{ such that }1\leq|\alpha|\leq |R|,
\label{convergence 4}
\end{equation}

\begin{equation}
-\tilde{h}e^{\tilde{v_s}}|_{(\tilde{\eta},z)} \to \half \text{ in } L^\infty(\Sigma) \text{ as } s\to \infty,
\label{convergence 5}
\end{equation}

\noindent and

\begin{equation}
\lim_{s\to\infty}\Wnormloc{\pRi \tilde{E}_{s}}{l}{\infty}{\Sigma}=0,
\label{convergence 7}
\end{equation}

\noindent $\forall l \in \mathbb{N}$, where we again use the amended notation $H^{l,\infty}(\Sigma)$ to denote the space of functions on $\Sigma$ with uniformly bounded derivatives up to $l^{th}$ order. All claims follow from direct computations of derivatives. To bound $\pRi \tilde{v_s}$, we observe that the argument of $\log$ in \eqref{definition of approximated solutions t}

\begin{eqnarray*}
& &\frac{\Delta_\Sigma\left(-\log(-\tilde{h})\right)-2c_1+\half s^2}{-s^2\tilde{h}}\\
&=& \frac{1}{-2\tilde{h}} - \frac{1}{s^2}\left[\frac{\Delta_\Sigma\left(-\log(-\tilde{h})\right)-2c_1{\tilde{h}}}{\tilde{h}}\right]
\end{eqnarray*}

\noindent is a smooth function function on $\Sigma$ at any $\tilde{\eta}\in \mathcal{U}$ and for any $R \in \mathbb{N}^m$,

\begin{eqnarray*}
& &\pRi \left[\frac{\Delta_\Sigma\left(-\log(-\tilde{h})\right)-2c_1+\half s^2}{-s^2\tilde{h}}\right] \\
&=&\pRi \left(\frac{1}{-2\tilde{h}}\right)-\frac{1}{s^2}\pRi \left[\frac{\Delta_\Sigma\left(-\log(-\tilde{h})\right)-2c_1{\tilde{h}}}{\tilde{h}}\right].
\end{eqnarray*}

\noindent Both terms in this expression are clearly smooth on $\Sigma$ and the factor $\frac{1}{s^2}$ of the second term, the only appearance of $s$, makes all its $z$-derivatives uniformly bounded, verifying \eqref{convergence 3}. For \eqref{convergence 4}, we simply observe that

\[-\tilde{h}e^{\tilde{v}_s} = \half + \frac{\Delta_\Sigma\left(-\log(-\tilde{h})\right)-2c_1}{-s^2} \to \half\]

\noindent and $\forall \alpha$ such that $1 \leq |\alpha| \leq |R|$,

\[\pRi \left(\tilde{h}e^{\tilde{v}_s}\right) = -\frac{\pRi \Delta_\Sigma \left(-\log\left(\tilde{h}\right)\right)}{s^2} \to 0\]

\noindent uniformly as $s\to \infty$. Constants are eliminated since $|\alpha| \geq 1$. These observations easily verify \eqref{convergence 4} and \eqref{convergence 5}. Lastly, we observe that the argument of $\log$ in \eqref{definition of the error term t}

\begin{eqnarray*}
& &\frac{\Delta_\Sigma \left(-\log\left(\tilde{h}\right)\right) -c_1 + \half s^2}{s^2} \\
&=&\half - \frac{1}{s^2} \left[\Delta_\Sigma \left(-\log\left(\tilde{h}\right)\right) -c_1\right]
\end{eqnarray*}

\noindent which clearly approaches $\half$ in all $H^{l,\infty}(\mathcal{U}\times \Sigma)$ as $s \to \infty$. It then follows that $\tilde{E_s} \to 0$ in all $H^{l,\infty}(\mathcal{U}\times \Sigma)$ and $\eqref{convergence 7}$ follows.

From \eqref{convergence 3}-\eqref{convergence 7}, we see that \eqref{convergence 1}-\eqref{convergence 6}, the three sufficient conditions for proving Proposition \ref{Precise Baptista's Conjecture}, are true if $\tilde{\varphi}_s$ is replaced by $\tilde{v}_s$. Therefore, it remains to show that at every $\tilde{\eta}\in \mathcal{U}$, the difference of $\pRi \tilde{\varphi}_s$ and $\pRi \tilde{v}_s$ converges to 0 in $H^{1,\infty}(\Sigma)$ for all $R\in \mathbb{N}^m$.

\begin{lemma}
For all multi-indices $R$, and $l \in \mathbb{N}$, we have

\[\lim_{s\to \infty} \Wnormloc{\p^R|_{\tilde{\eta}} \tilde{v}_s-\p^R|_{\tilde{\eta}} \tilde{\varphi}_s}{1}{p}{\Sigma}=0,\]

\noindent $\forall \tilde{\eta} \in \mathcal{U} \subset \Hol$. Here, $\p^R$ is the $R^{th}$ derivative with respect to coordinates $(w_1,\ldots,w_m)$ on $\mathcal{U}$.
\label{smooth convergence of function with approximations t}
\end{lemma}

\begin{proof}

We need to prove that for all $R \in \mathbb{N}^m$ and $\tilde{\eta}\in \mathcal{U}$,

\begin{equation}
\lim_{s\to\infty} \Lnormloc{\pRi \tilde{\varphi}_s - \pRi \tilde{v}_s}{\infty}{\Sigma}=0
\label{decay of difference}
\end{equation}

\noindent and

\begin{equation}
\lim_{s\to\infty} \Lnormloc{\pRi \frac{\p\tilde{\varphi}_s}{\p z} - \pRi \frac{\p \tilde{v}_s}{\p z}}{\infty}{\Sigma}=0.
\label{decay of difference derivative}
\end{equation}

The proof is essentially a repetition of that of Lemma \ref{smooth convergence of function with approximations}. We start with the difference of Laplacians of the quantities we wish to bound:

\begin{equation}
\Delta_\Sigma \left(\tilde{\varphi}_s-\tilde{v}_s\right)=-s^2\tilde{h}e^{\tilde{\varphi}_s}+s^2\tilde{h}e^{\tilde{v}_s}-\tilde{E}_s
\label{difference of Laplacians}
\end{equation}

\noindent and

\begin{eqnarray}
& &\Delta_\Sigma \left(\frac{\p\tilde{\varphi}_s}{\p z}-\frac{\p \tilde{v}_s}{\p z}\right)\nonumber \\
&=&-s^2 \frac{\p \tilde{h}}{\p z}e^{\tilde{\varphi}_s}+s^2 \frac{\p \tilde{h}}{\p z}e^{\tilde{v}_s}-\frac{\p \tilde{E}_s}{\p z} \nonumber \\
& &+s^2\tilde{h}\frac{\p \tilde{v}_s}{\p z}e^{\tilde{v}_s}-s^2\tilde{h}\frac{\p \tilde{\varphi}_s}{\p z}e^{\tilde{v}_s} + Q(z) \left(\frac{\p \tilde{\varphi}_s}{\p z} - \frac{\p \tilde{v}_s}{\p z} \right), \nonumber \\
\label{difference of Laplacians of derivatives}
\end{eqnarray}

\noindent where $Q(z)$ is a smooth function on $\Sigma$ arisen from the Riemannaian curvature tensors on $\Sigma$ and their derivatives. It is in particular independent of coordinates of $\mathcal{U}$. Similar to the proof of Lemma \ref{smooth convergence of function with approximations}, we apply $\pRi$ to \eqref{difference of Laplacians} and \eqref{difference of Laplacians of derivatives} above, follow by induction on $|R|$ and arguments from the maximum principle.

For $|R|=0$, no derivative on coordinates of $\mathcal{U}$ is taken. \eqref{decay of difference} and \eqref{decay of difference derivative} are merely special cases of Lemma \ref{smooth convergence of function with approximations} with $l=1$, as the holomorphic map chosen there is arbitrary as well. Suppose that \eqref{decay of difference} and \eqref{decay of difference derivative} hold for $|R|\leq l$. The inductive step, as in the proof of Lemma \ref{smooth convergence of function with approximations}, is established from the following crucial claim:

\begin{equation}
\lim_{s\to\infty}\Lnormloc{s^2\left(e^{\tilde{\varphi}_s(\tilde{\eta},z)}-e^{\tilde{v}_s(\tilde{\eta},z)}\right)}{\infty}{\Sigma}=0\hspace{0.5cm}\forall \tilde{\eta}\in\mathcal{U}.
\label{crucial claim 2}
\end{equation}

\noindent This is simply a claim that the Claim \eqref{convergence of varphi s and v s} holds for every smooth function $\tilde{\varphi}_s(\tilde{\eta},z)$ and $\tilde{v}_s(\tilde{\eta},z)$ induced from $\tilde{\eta} \in \mathcal{U}$, which is indeed true as the smooth functions in section 3 are all induced from an arbitrary holomorphic map.

\eqref{decay of difference} is almost an identical statement to Lemma \ref{smooth convergence of function with approximations} with the multi-index $I$ replaced by $R$. That is, the derivatives are taken with respect to coordinates of $\mathcal{U}$ instead of $\Sigma$. This replacement actually simplifies the computation considerably as $\p^R$ and $\Delta_\Sigma$ are independently defined on the separate components of $\mathcal{U}\times\Sigma$ and therefore commute. Consequentially, the curvature terms $Q^j(Rm)$'s in the proof of Lemma \ref{smooth convergence of function with approximations} do not appear here when commuting $\p^R$ and $\Delta_\Sigma$. With this liberty at hand, we readily compute

\begin{eqnarray}
& &\Delta_\Sigma \left(\pRi \tilde{\varphi}_s - \pRi \tilde{v}_s\right) \nonumber \\
=&-& s^2 \tilde{h}\left(\pRi \tilde{\varphi}_s - \pRi \tilde{v}_s\right) + \sum_{j\in \{R\}\cup M^l}\left(\tilde{A}_{j,s}+\tilde{B}_{j,s}\right)|_{(\tilde{\eta},z)} \nonumber \\
&+&\tilde{C}_s(\tilde{\eta},z)-\pRi \tilde{E}_s.
\label{difference tilde Laplacian}
\end{eqnarray}

\noindent Here, the smooth functions $\tilde{A}_{j,s},\tilde{B}_{j,s}$, $\tilde{C}_s$, and index set $M^l$ are all defined identically as $A_{j,s}$, $B_{j,s}$, $C_s$, and $M^l$ in the proof of Lemma \ref{smooth convergence of function with approximations}, with $h$, $v_s$, $\varphi_s$, and $Q^j(Rm)$'s replaced by $\tilde{h}$, $\tilde{v_s}$, $\tilde{\varphi_s}$, and $0$, respectively. Claim \eqref{crucial claim 2} and inductive hypothesis are then applied identically to obtain the follow decay conditions:

\begin{equation}
\lim_{s\to\infty}\Lnormloc{\frac{\tilde{A}_{j,s}(\tilde{\eta},z)}{s^2}}{\infty}{\Sigma}=\lim_{s\to\infty}\Lnormloc{\frac{\tilde{B}_{j,s} (\tilde{\eta},z)}{s^2}}{\infty}{\Sigma}
=\lim_{s\to\infty}\Lnormloc{\frac{\tilde{C}_s(\tilde{\eta},z)}{s^2}}{\infty}{\Sigma}=0
\label{decay of A B C tilde}
\end{equation}

\noindent Maximum principle is then identically applied. Namely, for each $\tilde{\eta}$ and $s$, there exist $\xs, \ys \in \Sigma$ such that for all $z\in\Sigma$,

\begin{eqnarray}
 & &\pRi \tilde{\varphi}_s-\pRi \tilde{v}_s \nonumber \\
&\leq& \frac{e^{-\tilde{\varphi}_s}}{\tilde{h}}
\left(\sum_{j\in \M}\left[\frac{\tilde{A}_{j,s}}{s^2}+\frac{\tilde{B}_{j,s}}{s^2}\right]+\frac{\tilde{C}_s}{s^2}-\frac{\p^R \tilde{E}_s}{s^2}\right)\vline_{(\tilde{\eta},\xs)}, \nonumber \\
\label{upper bound tilde}
\end{eqnarray}

\noindent and

\begin{eqnarray}
 & &\pRi \tilde{\varphi}_s-\pRi \tilde{v}_s \nonumber \\
&\geq& \frac{e^{-\tilde{\varphi_s}}}{\tilde{h}}
\left(\sum_{j\in \M}\left[\frac{\tilde{A}_{j,s}}{s^2}+\frac{\tilde{B}_{j,s}}{s^2}\right]+\frac{\tilde{C}_s}{s^2}-\frac{\p^R \tilde{E}_s}{s^2}\right)\vline_{(\tilde{\eta},\ys)}. \nonumber \\
\label{lower bound tilde}
\end{eqnarray}

\noindent \eqref{convergence 7} and \eqref{decay of A B C tilde} then imply that the right hand sides of \eqref{upper bound tilde} and \eqref{lower bound tilde} decay to 0 uniformly as $s\to\infty$, verifying \eqref{decay of difference}.

The uniform decay \eqref{decay of difference derivative} is obtained similarly despite its more tedious and lengthy computations. With the case $|R|=0$ verified and $|R|\leq l$ assumed, we aim to prove \eqref{decay of difference derivative} for an arbitrary $R\in \mathbb{N}^m$ with $|R|=l+1$. Applying $\pRi$ to both sides of \eqref{difference of Laplacians of derivatives} we obtain

\begin{eqnarray}
& &\Delta_\Sigma \left(\pRi \frac{\p\tilde{\varphi}_s}{\p z}-\pRi \frac{\p \tilde{v}_s}{\p z}\right) \nonumber \\
&=&-\tilde{h}\left[\left(\pRi \frac{\p \tilde{\varphi}_s}{\p z}\right) s^2e^{\tilde{\varphi}_s}-\left(\pRi \frac{\p \tilde{v}_s}{\p z}s^2e^{\tilde{v}_s}\right)\right]\vline_{(\tilde{\eta},z)} \nonumber \\ & &-s^2\tilde{h}e^{\tilde{\varphi}_s}\left[2-\frac{Q(z)}{s^2\tilde{h}e^{\tilde{\varphi}_s}}\right] \left(\p^R \frac{\p \tilde{\varphi}_s}{\p z} -\p^R \frac{\p \tilde{v}_s}{\p z}\right)\vline_{(\tilde{\eta},z)} \nonumber \\
& &+\sum_{j\in \{R\} \cup M^l} \hat{A}_{j,s}(\tilde{\eta},z)+\sum_{j\in \cup M^l} \hat{B}_{j,s}(\tilde{\eta},z)+\hat{C}_s(\tilde{\eta},z)+\hat{D}_s(\tilde{\eta},z)
-\tilde{h}\hat{\rho}(s) \p^R \frac{\p \tilde{v}_s}{\p z}\vline_{(\tilde{\eta},z)}, \nonumber \\
\label{Laplacian difference again}
\end{eqnarray}

\noindent where

\begin{eqnarray}
\hat{A}_{j,s}=\sum_{m^j(t)\neq ((j),(1))}& &\left[a_{m^j(t)}\p^{R-j}\left(\tilde{h}\frac{\p \tilde{v}_s}{\p z}+\frac{\p \tilde{h}}{\p z}\right)\left(\p^{m_i} \frac{\p \tilde{v}_s}{\p z}\right)^{t_i}\right]s^2e^{\tilde{v}_s} \nonumber \\
&-&\left[a_{m^j(t)}\p^{R-j}\left(\tilde{h}\frac{\p \tilde{\varphi}_s}{\p z}+\frac{\p \tilde{h}}{\p z}\right)\left(\p^{m_i} \frac{\p\tilde{\varphi}_s}{\p z}\right)^{t_i}\right]s^2e^{\tilde{\varphi}_s}, \nonumber \\
\label{definition of A j s hat}
\end{eqnarray}

\begin{equation}
\hat{B}_{j,s}=\sum_{j\in M^l} a_{((j),(1))}\left[\p^{R-j}\tilde{h}\left(\p^j \frac{\p \tilde{v}_s}{\p z}\right)\right]s^2e^{\tilde{v}_s}-a_{((j),(1))}\left[\p^{R-j}\tilde{h}\left(\p^j \frac{\p \tilde{\varphi}_s}{\p z}\right)\right]s^2e^{\tilde{\varphi}_s},
\label{definition of B j s hat}
\end{equation}

\begin{equation}
 \hat{C}_s=-s^2\tilde{h}e^{\tilde{\varphi}_s}\ \left(1-e^{\tilde{v}_s-\tilde{\varphi}_s}\right)\p^R \left(\frac{\p \tilde{v}_s}{\p z}\right),
 \label{definition of C  s hat}
\end{equation}

\begin{equation}
\hat{D}_s=-\frac{\p \tilde{h}}{\p z} \left[\left(\p^R \tilde{\varphi}_s\right)s^2 e^{\tilde{\varphi}_s}-\left(\p^R \tilde{v}_s\right)s^2 e^{\tilde{v}_s}\right],
\label{definition of D s hat}
\end{equation}

\noindent and

\begin{equation}
\hat{\rho}(s)\to 0 \text{ in }L^\infty(\Sigma) \text{ as } s \to \infty.
\label{definition of rho s}
\end{equation}

\noindent The inductive hypothesis, \eqref{convergence 3}, and \eqref{crucial claim 2} again form the required decaying conditions on all the functions on the last line of \eqref{Laplacian difference again} for us to apply the maximum principle:

\begin{eqnarray}
& &\lim_{s\to\infty}\Lnormloc{\frac{\hat{A}_{j,s}(\tilde{\eta},z)}{s^2}}{\infty}{\Sigma}=\lim_{s\to\infty}\Lnormloc{\frac{\hat{B}_{j,s}(\tilde{\eta},z)}{s^2}}{\infty}{\Sigma}
=\lim_{s\to\infty}\Lnormloc{\frac{\hat{C}_{s}(\tilde{\eta},z)}{s^2}}{\infty}{\Sigma} \nonumber \\
&=&\lim_{s\to\infty}\Lnormloc{\frac{\hat{D}_{s}(\tilde{\eta},z)}{s^2}}{\infty}{\Sigma}
=\lim_{s\to\infty}\Lnormloc{\frac{\tilde{h}\hat{\rho}(s) \p^R \frac{\p \tilde{v}_s}{\p z}}{s^2}\vline_{(\tilde{\eta},z)}}{\infty}{\Sigma}=0 \nonumber \\
\label{decay of A B C D hat}
\end{eqnarray}

\noindent for all $\tilde{\eta}\in \mathcal{U}$. The maximum principle of $\Delta_\Sigma$ then guarantees the existences of $\xs,\ys \in \Sigma$ so that for all $z\in\Sigma$, we have

\begin{eqnarray}
& &\pRi \frac{\p \tilde{\varphi_s}}{\p z} - \pRi \frac{\p \tilde{v_s}}{\p z} \nonumber \\
&\leq& \left(\frac{e^{-\tilde{\varphi_s}}}{-\tilde{h}\left[2-\frac{Q}{s^2\tilde{h}e^{\tilde{\varphi_s}}}\right]}
\left[\sum_{j\in\{R\}\cup M^l} \frac{\hat{A}_{j,s}}{s^2}+\sum_{j \in M^l} \frac{\hat{B}_{j,s}}{s^2}+\frac{\hat{C}_s}{s^2}+\frac{\hat{D}_s}{s^2}-\frac{\tilde{h}\rho(s)\p^R\left(\frac{\p \tilde{v_s}}{s^2}\right)}{s^2}\right]\right)\vline_{(\tilde{\eta},\xs)} \nonumber \\
\label{last max}
\end{eqnarray}

\noindent and

\begin{eqnarray}
& &\pRi \frac{\p \tilde{\varphi_s}}{\p z} - \pRi \frac{\p \tilde{v_s}}{\p z} \nonumber \\
&\geq& \left(\frac{e^{-\tilde{\varphi_s}}}{-\tilde{h}\left[2-\frac{Q}{s^2\tilde{h}e^{\tilde{\varphi_s}}}\right]}
\left[\sum_{j\in\{R\}\cup M^l} \frac{\hat{A}_{j,s}}{s^2}+\sum_{j \in M^l} \frac{\hat{B}_{j,s}}{s^2}+\frac{\hat{C}_s}{s^2}+\frac{\hat{D}_s}{s^2}-\frac{\tilde{h}\rho(s)\p^R\left(\frac{\p \tilde{v_s}}{\p z}\right)}{s^2}\right]\right)\vline_{(\tilde{\eta},\ys)}. \nonumber \\
\label{last min}
\end{eqnarray}

\noindent It then follows from \eqref{decay of A B C D hat} that the right hand sides of \eqref{last max} and \eqref{last min} decay to 0 in $L^\infty (\Sigma)$, proving our remaining claim \eqref{decay of difference derivative}.

\end{proof}

We have proved, that on the coordinate patch $\mathcal{U} \subset \Hol$, the function

\[\p^R F_s^i = \p^R g_s^* \left(\frac{\p}{\p w_i},\frac{\p}{\p w_i}\right)\]

\noindent converges pointwise to the smooth function

\[\int_\Sigma \frac{1}{2} \left[\p^R \left< \frac{\p}{\p w_i},\frac{\p}{\p w_i}\right>_{H_{FS}}\right]=\frac{1}{2} \p^R \int_\Sigma \left< \frac{\p}{\p w_i},\frac{\p}{\p w_i}\right>_{H_{FS}},\]

\noindent for all multi-index $R \in \mathbb{N}^m$ as $s \to \infty$. All functions $\p^R F_s^i$ and the limiting function are bounded on $\mathcal{U}$ and therefore admit smooth extensions to the compact set $\bar{\mathcal{U}}$. Since the limiting function is smooth, it follows that the functions $\p^R F_s^i$ converge uniformly to

\[\frac{1}{2} \p^R \int_\Sigma \left< \frac{\p}{\p w_i},\frac{\p}{\p w_i}\right>_{H_{FS}},\]

\noindent on $\bar{\mathcal{U}}$ which proves the smooth convergence of $g_s^*$ to a multiple of $<\cdot,\cdot>_{L^2}$ on $\mathcal{U}$. This completes the proof of Proposition \ref{Precise Baptista's Conjecture}.

\end{proof}

Proposition \ref{Precise Baptista's Conjecture} provides a plausible approach to prove Conjecture 5.3 in \cite{Ba}, which conjectures a formula of the volume of $\Hol$ with respect to $<\cdot,\cdot>_{L^2}$. The volume of $\nu_k(s)$ with respect to $g_s$ has been explicitly computed \cite{Ba} (See Theorem 5.1 there). In the notations we use in this paper, the formula is

\begin{equation}
Vol \nu_k(s) = \sum_{i=0}^b \frac{b!k^{b-i}}{i!(q-i)!(b-i)!}\left(\frac{4\pi}{s^2}\right)^i\left(Vol \Sigma - \frac{4\pi}{s^2}r\right)^{q-i},
\label{volume of moduli space}
\end{equation}

\noindent where

\[q=b+k(r+1-b)-1.\]

\noindent \eqref{volume of moduli space} is off by a factor of $\pi^q$ from \cite{Ba}, as we adopt the normalized K\"ahler form $\omega_{FS}$ here. Also, we have $4\pi$ here, instead of $2\pi$, as the adiabatic parameter $s^2$ here corresponds to $2e^2$ in \cite{Ba}. Since $\nu_{k,0}(s)$ is an open dense subset of $\nu_k(s)$, for $s<\infty$, \eqref{volume of moduli space} is also a formula $Vol \nu_{k,0}(s)$. Since $\Phi_\infty : \nu_{k,0}(\infty) \to \Hol$ is an isometry by Proposition \eqref{Precise Baptista's Conjecture}, letting $s\to\infty$ in \eqref{volume of moduli space} formally yields a conjectural formula for the volume of $\Hol$:

\begin{equation}
Vol \Hol = \frac{k^b}{q!}\left( Vol \Sigma\right)^q.
\label{volune of the space of Hol}
\end{equation}

\noindent This formula has been verified in \cite{Sp} for the case $b=0$ and $r=1$ using entirely independent techniques that are quite special to this given case. It is valid in general if \eqref{volume of moduli space} is true for $s=\infty$. One however, needs to confirm that the $L^2$ volume of $\nu_k(s)$ does not concentrate around $\nu_k(s) - \nu_{k,0}(s)$ so that \eqref{volune of the space of Hol} is equal to $\lim_{s\to \infty} \nu_k(s)$. The affirmation is not immediate, as sketched in the next section, that analytic defect appears on sections with common zeros which is also exhibited by the loss of topological invariants. It is author's great interest to confirm that the singularities of $L^2$ metrics on $\nu_k(s)$ produced as $s \to \infty$ does not impact the continuity of \eqref{volume of moduli space} in $s$ and the plausible argument above is indeed valid.

\section{Failure of the Results from Common Zeros and Bubbling}
We have restricted our discussion to the open subset $\nu_{k,0}(s)$ of $\nu_k(s)$ where sections do not
vanish simultaneously. This leads to the non-vanishing of the function $h$, allowing us to pick the
constant $K \in \mathbb{R}$ to control the smooth function $\Delta(-\log(-h))$ (See the proof of
Theorem \ref{Kazdan Warner Analysis} and Theorem \ref{Main Theorem}). When $k$ sections do have common
zeros, the function $h$ vanishes at the common zeros and the function $\Delta(-\log(-h))$ is no longer
smooth and bounded. It is therefore no longer possible to pick such $K$ to bound $\varphi_{+,s}$ and
$\varphi_{-,s}$. Without this vital condition, Main Theorem \ref{Main Theorem} does not hold and it is not
possible to obtain convergent behaviors of the functions $u_s$.

Although it is still possible to obtain the super and sub solutions for each $s$ in the proof of
Theorem \ref{Kazdan Warner Analysis} following the choices of $\varphi_{+,s}$ and $\varphi_{-,s}$ in
\cite{kw}, where $s=1$, these functions will not be uniformly bounded. Their $L^\infty$ norms
grow like $s^2$, failing to satisfy the crucial condition of the Main Theorem on the uniform bounds of
$\varphi_s$.

In fact, when sections do have common zeros, convergence of the family of solutions of vortex equations
\eqref{s-vortex} to those of \eqref{infinity-vortex} contradicts the topological constraint of the line
bundle $L$. An easy example can be observed for single section vortices $k=1$. At $s=\infty$, equation
\eqref{infinity-vortex} indicates that the section never vanishes on $\Sigma$, which is impossible for
line bundle of positive degree. However, as we have seen from the constructions in section 3, values of $s$ correspond to various gauge classes of connections and sections, which do not alter the topological structure of $L$. Analytically, the equation for
$\varphi_\infty$, namely $h e^{\varphi_\infty} +c_2=0$, can never be true unless $h$ contains singular
points. Consequentially, the density for Yang-Mills-Higgs functional is expected to blow up at the common
zeros of the sections, even though the energy functional stays bounded. One can certainly remedy this
setback by defining some smooth extension of the vortices across the singularities. However, it is then
necessary to sacrifice some topological data form our initial setting. This phenomenon is known as the
"bubbling" of vortices. Descriptions of the bubbles, as well as the leftover bundles, have been thoroughly
described in \cite{CGRS}, \cite{O}, \cite{W}, \cite{X}, and \cite{Z} in more general settings of symplectic
vortex equations.

\section{Remarks on Possible Generalizations}
At the time of submission of this article, a more generalized version of Baptista's conjecture is posed in \cite{Ba1}. The conjecture is similar, except that the Riemann surface $\Sigma$ is replaced by an arbitrary compact K\"ahler manifold and $\pk$ is replaced with a toric manifold. The analog of $\Hol$ there (with naturally defined $L^2$ metric) holomorphically embeds into the analog of $\nu_k(s)$ there, and it is conjectured that as $s \to \infty$ the embedding is isometric. As our Main Theorem does not restrict the dimension of the manifold, it is then natural to attempt to generalize our results to this setting.

Another possible generalization is to allow certain singularities on the Hermitian metrics. In \cite{D}, several regularity results are available for the types of elliptic PDE's considered in section 3, with background metric possessing conic singularities. It suggests possible generalization to our Main Theorem for the negative function $h$ with conic singularities. Such a result might possibly provide a more precise analytic picture on the bubbling phenomenon.

The author is eager to explore any possibility toward these two directions of generalizations.

\vspace{0.5cm}
\begin{center}
ACKNOWLEDGEMENT
\end{center}
\vspace{0.5cm}
\small{This paper is a part of the author's Ph.D thesis in the University of Illinois at Urbana-Champaign, USA. The
author wishes to express sincere gratitude toward his thesis advisor, Steven Bradlow, and co-advisor Gabriele
La Nave, for their guidance and invaluable inspirations. The author is also grateful toward Eduard-Wilhelm Kirr, for
the insightful discussions on the analytic details of the Main Theorem. Last but not the least, the author thanks
the hospitality of the Center for Mathematics and Theoretical Physics of National Central University in Chung-Li,
Taiwan, hosted by M.K. Hong, C.H. Hsu,  and S.Y. Yang. Major constructions of this paper were completed during the visit to the center.}

\end{document}